\newtheorem{theorem}{Theorem}
\newtheorem{proof}{Proof}
\newtheorem{lemma}{Lemma}
\newtheorem{corollary}{Corollary}
\newtheorem{proposition}{Proposition}
\newtheorem{remark}{Remark}
\newtheorem{example}{Example}
\newtheorem{assumption}{Assumption}
\newtheorem{definition}{Definition}
 \newcommand{\R}{{\mathbb{R}}}
\newcommand{\G}{{\mathcal{G}}}
\newcommand{\HH}{{\mathcal{H}}}
\newcommand{\V}{{\mathcal{V}}}
\newcommand{\EE}{{\mathcal{E}}}
\newcommand{\rhoo}{{\rho}}
\newcommand{\NN}{\mathfrak{N}}
\newcommand{\LL}{{\mathfrak{L}_{n}}}
\newcommand{\ie}{{i.e.,}}
\newcommand{\Rel}{{\text{Re}}}
\newcommand{\diag}{{\text{diag}}}
\DeclareMathOperator{\tr}{{Tr}}
\DeclareMathOperator{\E}{\mathbb{E}}
\newcommand{\RC}{\color{red}}
\newcommand{\BC}{\color{blue}}
\begin{document}

\title{Abstraction of Linear Consensus Networks with Guaranteed  Systemic Performance Measures } 
\author{Milad Siami$^{\dagger}$  
        and~Nader Motee$^{\star}$

\thanks{$^{\dagger}$ M. Siami is with the Institute for Data, Systems, and Society, Massachusetts Institute of Technology, Cambridge, MA 02319. Email:  {\tt\small siami@mit.edu}.}
\thanks{$^{\star}$ N. Motee is with the Department of Mechanical Engineering and Mechanics, Packard Laboratory, Lehigh University, Bethlehem, PA 18015. Email:  {\tt\small motee@lehigh.edu}.}}
\maketitle

\begin{abstract}   

A proper abstraction of a large-scale linear consensus network with a dense coupling  graph is one whose number of coupling links is proportional to its number of subsystems and its performance is comparable to the original network. Optimal design problems for an abstracted network are more amenable to efficient optimization algorithms. From {the} implementation point of view, maintaining such networks are usually more favorable and cost effective due to their reduced communication requirements across a network. Therefore, approximating a given dense linear consensus network by a suitable abstract network is an important analysis and synthesis problem. In this paper, we develop a framework to compute {an} abstraction of a given large-scale linear consensus network with guaranteed performance bounds using a nearly-linear time algorithm.
First, the existence of abstractions of a given network is proven. Then, we present an efficient and fast algorithm for computing a proper abstraction of a given network. 
Finally, we illustrate {the} effectiveness of our theoretical findings via several numerical simulations. 
\end{abstract}

\section{Introduction}
\allowdisplaybreaks

Reducing design complexity in interconnected networks of dynamical systems by means of abstraction are central in several real-world applications \cite{Siami14cdc-2, siami14acc, Olfati, Siami13siam, farlinjovTAC14sync,Summers16}. Various notions of abstractions for dynamical systems have been widely used by researchers in the context of control systems in past  decades, see \cite{Girard, H2error, Model_reduction, Rantzer} and references in there, where the notion of {\it reduction} mainly implies projecting dynamics of a system to lower dimensional state spaces. In this paper, we employ a relevant notion of abstraction in the context of interconnected dynamical network: for a given dynamical network that is defined over a coupling graph, find another dynamical system whose coupling graph is significantly sparser and its performance quality remains close to that of the original network. In this definition, abstraction can be regarded as a notion of network reduction. There are several valid reasons why reduction in this sense is useful in design, maintenance, and implementation of dynamical networks. Real-time generation of state estimation in large-scale dynamical networks can be done much more efficiently and faster if proper abstractions are utilized. Optimal control problems that involve controller design, feedback gain adjustments, rewiring existing feedback loops, and etc. are more amenable to efficient computational tools that are specifically tailored for optimization problems with sparse structures. In security- or privacy-sensitive applications such as formation control of group of autonomous drones, it is usually required to minimize communication requirements across the network to reduce risk of external intrusions. In power network applications,  network authorities periodically provide access to their network data and parameters for academic (or public)  studies and evaluations. In order to reduce possibility of planned malicious attacks, network authorities can perform abstractions in order to hide actual values of parameters in their networks by preserving all other important characteristics of the network that interest researchers. 

The goal of this paper is to address the abstraction problem for the class of linear consensus networks. In \cite{Siami14cdc-2}, we introduce a class of operators, so called systemic performance measure, for linear consensus networks that provides a unified framework for network-wide performance  assessment.  Several existing and popular performance measures in the literature, such as $\mathcal H_2$ and $\mathcal H_{\infty}$ norms of a consensus network from a disturbance input to its output, are examples of systemic performance measures. {This class of operators is obtained through our close examination of functional properties of several
existing gold standard measures of performance in the context of network engineering and
science. }
An important contribution of this reference paper is that it enables us to optimize performance of a consensus network solely based on its intrinsic features. The authors formulate several optimal design problems, such as weight adjustment as well as rewiring of coupling links, with respect to this general class of systemic performance measures and propose efficient algorithms to solve them.  In \cite{Siami13cdc,Siami14arxiv}, we quantify several fundamental tradeoffs between a $\mathcal{H}_2$-based performance measure and  sparsity measures of a linear consensus network. The problem of sparse consensus network design  has been considered before in \cite{farlinjovTAC14sync,wujovACC14,mogjovACC15,Dhingra12}, where they formulate an $\ell_0$-regularized  $\mathcal H_2$ optimal control problem.   The main common shortcoming of existing works in this area is that they are heavily relied on computational tools with no analytical performance guarantees for the resulting solution. More importantly, the proposed methods in these papers mainly suffer from high computational complexity as network size grows.  

For a given linear consensus network with an undirected connected graph, the  network abstraction problem seeks to construct a new network with a reasonably sparser graph compared to the original network such that the dynamical behavior of the two networks remains similar in an appropriately defined sense. We develop a methodology that computes abstractions of a given consensus network using a  nearly-linear time $\tilde{\mathcal O}(m)$\footnote{We use $\tilde {\mathcal O}(.)$ to hide poly $\log \log$ terms from the asymptotic bounds. Thus, $ f(n)\in {\tilde {\mathcal O}}\left(g(n)\right)$ means that there exists $c>0$ such that $f(n)\in \mathcal O\left (g(n)\log ^{c}g(n)\right )$.} algorithm with guaranteed systemic performance bounds, where $m$ is the number of links. Unlike other existing work on this topic in the literature, our proposed framework:  (i) works for a broad class of systemic performance measures including $\mathcal H_2$-based performance measures, (ii) does not involve any sort of relaxations such as $\ell_0$ to $\ell_1$,\footnote{ We discuss some of the shortcomings of the $\ell_0$/$\ell_1$-regularization based sparsification methods in Section \ref{ell0-ell1}.} (iii) provides guarantees for the existence of a sparse solution, (iv) can partially sparsify predetermined portions of a given network; and most importantly, (v) gives guaranteed levels of performance.    While our approach is relied on several existing works in algebraic graph theory \cite{Spielman, Spielman14}, our control theoretic contributions are threefold. First,  we show that there exist proper  abstractions for every given linear consensus network. Second,  we develop a framework to compute a proper abstraction of a network using a fast randomized algorithm. One of the main features of our method is that while the coupling graph of the abstracted network is a subset of the coupling graph of the original network, the link weights  (the strength of each coupling) in the  sparsified network are adjusted accordingly to reach predetermined levels of systemic performance. Third, we prove that our method can also be applied for partial abstraction of large-scale networks, which means that we can abstract a prespecified subgraph of the original network. This is practically plausible as our algorithm can obtain an abstraction using only spatially localized  information. Moreover, this allows parallel implementation of the abstraction algorithm in order to achieve comparably lower time complexity.

\section{Notation and Preliminaries}
\allowdisplaybreaks

	The set of real, positive real, and strictly positive real numbers are represented by $\R$, $\R_+$ and $\R_{++}$, respectively. A matrix is generally represented by an upper case letter, say $X = [x_{ij}]$, where $x_{ij}$ is the $(i,j)$ element of matrix $X$ and $X^{\text T}$ indicates its transposition. We assume that $\mathbbm{1}_n$ and $I_n$ denote the $n \times 1$ vector of all ones and the $n \times n$ identity matrix, respectively. The centering matrix is defined by $M_n=I_n - \frac{1}{n} J_n$ in which $J_n$ is the $n \times n$ matrix of all ones. {Notation $X\succeq Y$ is equivalent to matrix $X-Y$ being positive semi-definite.}{ A graph is represented by $\G~=~(\V, \EE,w)$, where $\V$ is the set of nodes, $\EE \subset \V \times \V$ is the set of links, and $w: \V \times \V \rightarrow  \R_{+}$ is the weight function. The value of the weight function is zero for $e \in \V \times \V \, \backslash \, \EE $ and positive for $e \in \EE$.} { The weighted degree of  node $i \in \V$ is defined by
	\begin{equation}
		d_i ~:=~\sum_{e=\{i,j\} \in \EE} w(e).
	\end{equation}
The neighborhood of node $i$ is denoted by set $\mathcal N (i)$ that consists of all adjacent nodes to $i$ and its cardinality $|\mathcal N (i)|$ is equal to the number of neighbors of node $i$. In unweight graphs, $|\mathcal N (i)|$ is equal to the degree of node $i$.}
The adjacency matrix $A = [a_{ij}]$ of graph $\G$ is defined by setting $a_{ij} = w(e)$ if $e=\{i,j\} \in \EE$, and $a_{ij}=0$ otherwise. The Laplacian matrix of graph $\G$ with $n$ nodes is defined by \[L :=\diag[d_{1},\ldots,d_{n}] - A.\] 
A  $n$-by-$m$ oriented incidence matrix $E=[\mathfrak{e}_{ij}]$ for $1 \leq i \leq n$ and $1 \leq j \leq m$ can be formed by assigning an arbitrary direction for every link of $\G$, labeling every link by a number $1 \leq j \leq m$, and letting $\mathfrak{e}_{ij} = 1$ whenever node $i$ is the head of  (directed) link $j$, $\mathfrak{e}_{ij}=-1$ if node $i$ is the tail of  (directed) link $j$, and $\mathfrak{e}_{ij}=0$ when link $j$ is not attached to node $i$ for all possible orientations of links. The weight matrix $W=[w_{kk}]$ is the $m$-by-$m$ diagonal matrix with diagonal elements $w_{kk}=w(e_k)$ for $1 \leq k \leq m$. It follows that
	\begin{equation*}
		L ~=~ EWE^{\text T}.
	\label{L and E and W}
	\end{equation*}
\begin{assumption}\label{assum-on-graphs}
All graphs in this paper are assumed to be finite, simple, undirected, and connected.
\end{assumption}	
According to this assumption, every considered Laplacian matrix in this paper has exactly $n-1$ positive eigenvalues and one zero eigenvalue, which allow us to index them in ascending order $0=\lambda_1 \leq \lambda_2 \leq \cdots \leq \lambda_n$. The set of Laplacian matrices of all connected weighted graphs over $n$ nodes is represented by $\LL$. The Moore-Penrose pseudo-inverse of $L$ is denoted by $L^{\dag}=[l_{ji}^{\dag}]$ which is a square, symmetric, doubly-centered and positive semi-definite matrix. The corresponding resistance matrix $R=[r_{ij} ]$ to Laplacian matrix $L$ is defined by setting 
\[r_{ij} \, = \, l_{ii}^{\dag}+l_{jj}^{\dag}-2l_{ij}^{\dag}\]
in which $r_{ij}$ is called the effective resistance between nodes $i$ and $j$. Moreover, we denote the effective resistance of link $e=\{i,j\}$ by $r(e)=r_{ij}=r_{ji}$. The $\ell_0$ sparsity measure of matrix $A=[a_{ij}] \in \R^{n \times n}$ is defined by 
\begin{equation}
\|A\|_{\ell_0} := \mathrm{card} \big\{ (i,j) ~\big|~a_{ij} \neq 0 \big\}. \label{sparsity-measure-1}
\end{equation}
The $\mathcal{S}_{0,1}$ sparsity measure of matrix $A$ is defined by 
\begin{equation}
\|A\|_{\mathcal{S}_{0,1}} := \max \Big\{ \max_{1 \leq i \leq n} \| A(i,.) \|_{0}, \max_{1 \leq j \leq n} \| A(.,j) \|_{0}  \Big\},\label{local-measure}
\end{equation}
where $A(i,.)$ represents the $i$'th row and $A(.,j)$ the $j$'th column of matrix $A$.  The value of the $\mathcal{S}_{0,1}$-measure of a matrix is the maximum number of nonzero elements among all rows and columns of that matrix \cite{motee-sun}.
\section{Problem Statement}

\subsection{Network model}
We consider a class of consensus networks that consist of  a group of subsystems whose state variables $x_i$,  control inputs $u_i$, and output variables $y_i$ are scalar and their dynamics evolve with time according to 
\begin{eqnarray}
\dot{x}_i(t) & = & u_i(t) +\xi_i(t) \label{TI-consensus-algorithm} \\
y_i(t) & = & x_i(t) - \bar{x}(t)  \label{TI-consensus-algorithm-2}
\end{eqnarray}
for all $i=1,\ldots,n$, where {$x_i(0)=x_0^i$} is the initial condition and \[\bar{x}(t)=\frac{1}{n}\big(x_1(t)+\ldots+x_n(t)\big)\] 
is the average of all states at time instant $t$. The impact of the uncertain environment on each agent's dynamics is modeled by the exogenous noise/disturbance input $\xi_i$. By applying the following linear feedback control law to the agents of this network 
\begin{equation}
u_i(t) ~=~\sum_{j=1}^{n} k_{ij} \big(x_j(t) - x_i(t)\big),\label{feedback-law}
\end{equation}
where $k_{ij}$ is the feedback gain between subsystems $i$ and $j$, the closed-loop dynamics of network (\ref{TI-consensus-algorithm})-(\ref{feedback-law}) can be written in the following compact form
	\begin{equation}
		\NN(L):
		\begin{cases}
			\dot x(t)~=~-L x(t)+\xi(t)\\
			y(t)~=~M_n x(t)
		\end{cases}
	\label{first-order-G}
	\end{equation}
with initial condition $x(0) ~=~ x_0$, where $x$, $\xi$ and $y$ denote the state vector of the entire network, the exogenous disturbance input and the output vector of the network, respectively. The Laplacian matrix $L=[l_{ij}]$ is defined by 
\begin{equation}
\displaystyle l_{ij} := \left\{\begin{array}{ccc}
-k_{ij} & \textrm{if} & i \neq j \\
 &  &  \\
k_{i1}+\ldots+k_{in}& \textrm{if} & i=j
\end{array}\right..
\end{equation}
%
%
The coupling graph of the consensus  network \eqref{first-order-G} is a graph $\G=(\V,\mathcal E, w)$ with node set $\V=\{1,\ldots,n\}$, link set 
\begin{equation} 
	\EE=\Big\{ \{i,j\}~\Big|~\forall~i,j \in \V:~k_{ij} \neq 0\Big\}, \label{edge-set}
\end{equation}
and weight function 
\begin{equation}
	w(e)= \left\{\begin{array}{ccc}
k_{ij} & \textrm{if} & e=\{i,j\} \in \EE \\
 &  &  \\
0& \textrm{if} & e \notin \EE
\end{array}\right..\label{edge-weight}
\end{equation}
One may verify that the Laplacian matrix of graph $\G$ is equal to $L$. 
\begin{assumption}\label{assump-simple}
All feedback gains (weights) satisfy the following properties for all $i,j \in \V$: 

\vspace{0.1cm}
\noindent (i)~non-negativity: $k_{ij} \geq 0$, \\
\noindent (ii)~symmetry: $k_{ij}=k_{ji}$,\\
\noindent (iii)~simpleness: $k_{ii}= 0$.
\vspace{0.1cm}
\end{assumption}

Property (ii) implies that feedback gains are symmetric and (iii) means that there is no self-feedback loop in the network.

\vspace{0.1cm}
\begin{assumption}\label{assum-coupling-graph}
The coupling graph $\G$ of the consensus network \eqref{first-order-G} is  time-invariant.
\end{assumption}
\vspace{0.1cm}

Based on Assumption \ref{assum-coupling-graph},  the corresponding eigenvector to the only marginally stable mode of the network is $\mathbbm{1}_n$. This mode is unobservable from the performance output as the output matrix of the network satisfies $M_n \mathbbm{1}_n=0$.

\subsection{Homogeneous Systemic Performance Measures}

	\begin{figure}[t]
	\centering
	 \psfrag{A}[h][h]{ \footnotesize{$\Omega$}}  
	 \psfrag{B}[t][t]{ \footnotesize{$\Omega_s$}}  
	 \psfrag{C}[t][t]{ \footnotesize{$\Omega_h$}}        
	\includegraphics[trim = 0 0 0 0, clip,width=.2 \textwidth]{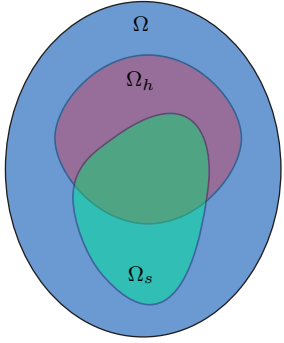} 
	\caption{\small A Venn diagram that shows the relationship among sets $\Omega$, $\Omega_s$, and $\Omega_h$.  The  set of general systemic measures $\Omega$ is a superset of both the set of homogeneous systemic measures $\Omega_h$ and the set of spectral systemic measures $\Omega_s$. While the intersection of sets $\Omega_s$ and $\Omega_h$ is nonempty, there are some systemic measures that belong only to one of these sets.  %
	}
	\label{fig:venn}
\end{figure}

	A {systemic measure} in this paper refers to a real-valued operator over the set of all consensus networks governed by  (\ref{first-order-G}) with the purpose of quantifying performance of this class of networks in presence exogenous uncertainties.   Since every network with dynamics (\ref{first-order-G}) is uniquely determined by its Laplacian matrix, it is reasonable to define a systemic performance measure as an operator on set $\LL$. 
\begin{definition}\label{def-homo-systemic}
	An operator $\rhoo: \mathfrak L_n \rightarrow \R_+$ is called  a  homogeneous systemic measure of {order $-\alpha$, where $\alpha > 0$}, if it satisfies the following properties for all matrices in $\mathfrak L_n$:	
	
	\noindent {1.} {\it Homogeneity:} For  all $\kappa >1$, 
	\[ \rhoo (\kappa L)~=~\kappa^{-\alpha} \rhoo (L);\]
	\noindent {2.} {\it Monotonicity:} If $L_{2} \preceq L_{1}$, then
				\[\rhoo (L_{1}) ~\leq~ \rhoo (L_{2});\]	
	\noindent {3.} {\it Convexity:} For all $0 \leq c \leq 1$,
	\[\rhoo (c L_{1}+(1-c)L_{2})~\leq~ c\rhoo (L_{1})+{(1-c)}\rhoo (L_{2});\]
	\end{definition}

The set of all homogeneous systemic performance measures is denoted by $\Omega_h$. We adopt an axiomatic approach to introduce and categorize a general class of performance measures that captures the quintessence of a meaningful measure of performance in
large-scale dynamical networks \cite{Chen2013axiomatic}. Property 1 implies that intensifying the coupling weights by ratio $\kappa>1$ results in $\kappa^{\alpha}$ times better performance. Property 2 guarantees that strengthening  couplings in a consensus network never worsens the network performance with respect to a given systemic measure. The monotonicity property induces a partial ordering on all linear consensus networks with dynamics  (\ref{first-order-G}). 
Adding new coupling links or strengthening the existing couplings will result in  better performance.  {Property 3 is imposed for the pure purpose of having favorable (convex) network design optimization problems.}

The class of systemic performance measures can be classified based on their functional properties according to Definition \ref{def-homo-systemic}. Let us denote the set of spectral systemic performance measures by $\Omega_s$. This class consists of all measures that satisfy properties 2, 3 and orthogonal invariance\footnote{A systemic measure is orthogonally invariant if $\rhoo(L) ~=~ \rhoo(ULU^{\text T})$ for every orthogonal matrix $U$ for which $UU^{\text T} = U^{\text T} U =I$.}. We refer to \cite{SiamiSystemic} for a comprehensive study of this class of performance measures. It is proven that all measures in $\Omega_s$ depend only on   Laplacian eigenvalues. Let us represent the set of all general systemic performance measures that only satisfy  properties 2 and 3 by $\Omega$. Fig. \ref{fig:venn} shows the relationship between the sets of spectral, homogeneous, and general systemic performance measures.

{
\begin{definition}\label{norm-perf}
For a given linear consensus network $\NN(L)$  endowed with a  homogeneous systemic measure $\rho: \LL \rightarrow \R_+$ of order $-\alpha$, its corresponding normalized performance index is defined by
\begin{equation}
\Pi_{\rho}(L) := \sqrt[\alpha]{\rho(L)}.\label{hemogeniz}
\end{equation}
\end{definition}

}

\subsection{Network Abstraction Problem}\label{subsec-problem}

Our goal is to develop a framework to compute an {\it abstraction} of a given linear consensus network with predetermined levels of performance  and sparsity (i.e., link reduction). 

\begin{definition}\label{abstract-def}
Let us consider network $\NN(L)$ that is governed by \eqref{first-order-G}. For a properly chosen pair of design parameters $d \in \R_{++}$ and $\epsilon \in (0,1)$, another  $\NN(L_s)$ is said to be an {$(\epsilon,d)$-abstraction} of $\NN(L)$ if and only if 

\vspace{0.1cm}
\noindent (i)  $\NN(L_s)$ has at most $dn/2$ feedback links;

\noindent (ii) $\NN(L_s)$ is an $\epsilon$-approximation of $\NN(L)$ in the following sense 
  	\begin{equation}\label{rho-approx}	
	\left| \frac{\Pi_{\rho}(L)-\Pi_{\rho}(L_s)}{\Pi_{\rho}(L_s)}	\right|\leq \epsilon			\end{equation}  
for every homogeneous systemic {performance} measure $\rho: \LL \rightarrow \R_+$.	
\end{definition}

{Property (i) implies that {the average number of neighbors for every node in $\NN(L_{s})$ is less than $d$, i.e., 
\[\bar{d}~=~\frac{1}{n}\sum_{i=1}^n |\mathcal N(i)| ~=~ 2 \frac{|\EE_s|}{n} ~\leq~ d,\]
where $\mathcal N (i)$ and $\mathcal E_s$ denote the set of all adjacent nodes to $i$ and the set of all links in the abstraction, respectively. Therefore, one can think of design parameter $d$ as an upper bound on the desired average number of neighbors of nodes in the abstracted network which is independent of the network size.}
For Property (ii), inequality \eqref{rho-approx} indicates that the resulting abstracted network $\NN(L_{s})$ has guaranteed performance bounds with respect to $\NN(L)$. The design constant  $\epsilon$ is referred to as  permissible performance loss parameter.  }


 \begin{table*}[t]
\begin{center}
    \begin{tabular}{ | p{7.0cm}  | l | p{8.0cm}  | }
    \hline
Homogeneous Systemic   Performance Measure & Symbol & Representation \\ \hline \hline
    Spectral Riemann zeta function & ${\zeta}_{q}(L)$ & $\displaystyle \Big( \sum_{i=2}^n \lambda_i^{-q} \Big)^{1/q}$  \\
    \hline
        Gamma entropy & $I_{\gamma}(L)$ & $\displaystyle  \gamma^2 \sum_{i=2}^n  \Big(\lambda_i- \big(\lambda_i^2-\gamma^{-2}\big)^{\frac{1}{2}} \Big)$ 
        \\
    \hline    
System Hankel norm        & $\eta(L)$ & $\displaystyle  \frac{1}{2}\lambda_2^{-1}
$  \\
    \hline  
Hardy-Schatten or $\mathcal{H}_p$ system norm              & $\theta_p(L)$ & $\displaystyle \left\{ \frac{1}{2\pi} \int_{-\infty}^{\infty} \sum_{k=1}^n \sigma_k(G(j \omega))^p  \hspace{0.05cm} d\omega \right\}^{1/p}$  $ = \alpha_0 \left( \tr \left( L^\dag \right)^{p-1}\right)^{\frac{1}{p}}$ 
        \\
    \hline  
Local Deviation Error for First Order Consensus Networks        & $\Delta(L)$ & $\frac{1}{2}\sum_{i=1}^n {d_i}^{-1}$  \\
    \hline
Local Deviation Error for Second Order Consensus Networks with $\beta >0$        & $\Upsilon(L)$ & $\frac{1}{2\beta}\sum_{i=1}^n {d_i}^{-2}$  \\
    \hline  
$\mathcal H_2$-norm of  Second Order Consensus Networks with $\beta >0$        & $\Theta_2(L)$ & $\left(\frac{1}{2\beta}\sum_{i=1}^n {\lambda_i}^{-2}\right)^{1/2}$  \\   
        \hline  
    \end{tabular}
        \caption{ \small Some important examples of homogeneous systemic performance measures.} \label{tab} 
\end{center}
   \vspace{-0.5cm}
\end{table*}

\section{Examples of Relevant Homogeneous Systemic Performance Measures}
	
We now present some existing and widely-used systemic performance measures for linear consensus networks; a list of these measures are  summarized in Table \ref{tab}.
	
\subsection{Sum of Homogeneous Spectral Functions}
\label{f_sum}

This class of performance measures is generated by forming summation of a given function of Laplacian eigenvalues. For a given matrix $L \in \mathfrak L_n$, suppose that  {$\varphi: \R_{+} \rightarrow \R_{+}$} is a decreasing homogeneous convex function. Then, the following spectral function 
\begin{equation}
	\rhoo (L) ~=~\sum_{i=2}^n \varphi(\lambda_i)
	\label{eq:512}
\end{equation}
is a homogeneous systemic measure \cite{SiamiSystemic}. Moreover, if {$\varphi$} is a homogeneous function of order $-\alpha$ where $\alpha>1$,  then its corresponding normalized index
\begin{equation}
	\Pi_\rhoo(L) ~=~ \left ( \sum_{i=2}^n \varphi(\lambda_i) \right)^{\frac{1}{\alpha}} 
	\label{eq:measure-normalized}
\end{equation}
is also a homogeneous systemic performance measure \cite{SiamiSystemic}. Some notable examples of this class of measures are discussed in the following parts. 

\subsubsection{Spectral Riemann Zeta Measures}
For a given network \eqref{first-order-G}, its corresponding spectral Riemann zeta function of order $q \geq 1$ is defined by
\begin{equation}
	{\zeta}_{q}(L)~:=~\bigg( \sum_{i=2}^n \lambda_i^{-q} \bigg)^{1/q},
	\label{zeta-measure}
\end{equation}
where $\lambda_2, \ldots,\lambda_n$ are eigenvalues of $L$ \cite{Hawking}. According to Assumption \ref{assum-coupling-graph}, all Laplacian eigenvalues  are strictly positive and, as a result, function \eqref{zeta-measure} is well-defined. According to the result presented in Subsection \ref{f_sum}, since $\varphi(\lambda)=\lambda^{-q}$ for $q \geq 1$ is a decreasing homogeneous convex function, the spectral function \eqref{zeta-measure} is a homogeneous systemic performance measure. The homogeneous systemic performance measure $\frac{1}{2}{\zeta}_{1}(L)$ is equal to the $\HH_2$-norm squared of a first-order consensus network \eqref{first-order-G} and ${\frac{1}{\sqrt{2}}\zeta}_{2}(L)$ equal to the $\HH_2$-norm of a second-order consensus model of a network of multiple agents (c.f. \cite{Siami13cdc}).


\subsubsection{Gamma Entropy}

The notion of gamma entropy arises in various applications such as {the design of minimum entropy controllers and interior point polynomial-time methods
in convex programming} with matrix norm constraints \cite{Blondel1999}. As it is shown in \cite{boyd97}, the notion of gamma entropy can be interpreted as a performance measure for linear time-invariant systems with random feedback controllers  by relating the gamma entropy to the mean-square value of the closed-loop gain of the system. The $\gamma$-entropy of network \eqref{first-order-G} is defined as
\begin{eqnarray*}
{ I_{\gamma}(L) := \begin{cases}
	\frac{-\gamma^2}{2\pi}\int_{-\infty}^{\infty} \log \det \big(I- \gamma^{-2} G(j\omega)G^*(j\omega) \big)d\omega~\\
	~~~~~~~~~~~~~~~~~~~~~~~~~~~~~~~~~~~~~~~~~~~~~~\text{for}~ \gamma \geq \|G\|_{\mathcal H_\infty}  \\
	\\
	\infty~~~~~~~~~~~~~~~~~~~~~~~~~~~~~~~~~~~~~~~~~~~~~~~~~~~\text{otherwise}
	\end{cases}}
	\label{gamma-formula}
\end{eqnarray*}
where $G(j\omega)$ is the transfer matrix of network \eqref{first-order-G} from $\xi$ to $y$ \cite{boyd97}. In \cite{SiamiSystemic}, it is shown that  the value of the $\gamma$-entropy for a given linear consensus network \eqref{first-order-G} can be explicitly computed in terms of Laplacian spectrum as follows
\begin{eqnarray}{
	I_{\gamma}(L)=\begin{cases}
	\displaystyle \sum_{i=2}^n ~f_{\gamma}(\lambda_i)~~~~~\gamma \geq \lambda_2^{-1}  \\
	\\
	\infty~~~~~~~~~~~~~~~\text{otherwise}
	\end{cases}},
	\label{gamma-formula}
\end{eqnarray}
where $f_{\gamma}(\lambda_i)=\gamma^2 \left( \lambda_i-\left(\lambda_i^2-\gamma^{-2}\right)^{\frac{1}{2}} \right)$.
Furthermore, the $\gamma$-entropy $I_{\gamma}(L)$ is a homogeneous systemic performance measure.

\subsection{Uncertainty volume}
	The uncertainty volume of the steady-state output covariance matrix of network \eqref{first-order-G} is defined by   
\begin{equation}
	|\Sigma|:= \det \Big( Y_{\infty} +{\frac{1}{n}J_n}\Big)
	\label{error ellipsoid}
\end{equation}
in which
\[ Y_{\infty} = \lim_{t \rightarrow \infty} \mathbb{E} \big[ y(t) y^{\text T}(t) \big].  \]
	This quantity is widely used as an indicator of the network performance \cite{siami14acc} and \cite{Mesbahi15cdc}. Since $y(t)$ is the error vector that shows distance from consensus, the quantity (\ref{error ellipsoid}) can be interpreted as volume of the steady-state error ellipsoid. It is straightforward to show this measure satisfies all properties of Definition \ref{def-homo-systemic}.

{\subsection{Hankel Norm} 

The Hankel norm of network \eqref{first-order-G}  and transfer matrix $G(j \omega)$ from $\xi$ to $y$ is defined as the $\mathcal L_2$-gain from past inputs to the future outputs, \ie
\[\|G\|_{H}^2~:=~\sup_{\xi \in L_2 (-\infty, 0] } \frac{\int _0^{\infty} y^{\rm T}(t) y(t) dt}{\int_{-\infty}^0 \xi^{\rm T}(t) \xi(t) dt}. \]
The value of the Hankel norm of network \eqref{first-order-G} can be equivalently computed using the Hankel norm of its disagreement form \cite{Olfati} that is given by 
\begin{eqnarray}
	\dot x_d(t)&=&-L_{d}  \, x_d(t) + M_n \, \xi(t), \label{first-order-d}\\
     	y(t)&=&M_n \hspace{0.05cm}x_d(t), \label{first-order-G-d}
\end{eqnarray}
where the disagreement vector is defined by 
\begin{equation}
	x_d(t) ~:=~ M_n \, x(t)~=~x(t) - \frac{1}{n}J_n \, x(t).
	\label{dis-vector-1}
\end{equation}
The disagreement network \eqref{first-order-d}-\eqref{first-order-G-d} is stable as the real part of every eigenvalue of the state matrix $-L_{d}=-(L+\frac{1}{n}J_n)$ is strictly negative. One can  verify that the transfer matrices from $\xi(t)$ to $y(t)$ in both realizations are identical. Therefore, the Hankel norm of the system from $\xi(t)$ to $y(t)$ in both representations are well-defined and equal, and is given by  \cite{hankel}
\begin{equation}
	\eta(L):=\|G\|_H= \sqrt{\lambda_{\max}(PQ)},
	\label{hankel}
\end{equation}
where the controllability Gramian $P$ is  the unique solution of
\begin{equation*}
	\Big(L + \frac{1}{n}J_n\Big)P+P\Big(L+\frac{1}{n}J_n \Big)- M_n =  0 
\end{equation*}
and the observability Gramian $Q$ is   the unique solution of
\begin{equation*} 
	Q\Big(L + \frac{1}{n}J_n\Big)+\Big(L+\frac{1}{n}J_n \Big)Q - M_n =  0. 
\end{equation*}
It is shown in \cite{SiamiSystemic} that the value of the Hankel norm of network \eqref{first-order-G} is equal to 
\[  \eta(L) ~=~ \frac{1}{2}\lambda_2^{-1}.\] 
One can verify that this measure is  a homogeneous systemic performance measure. 

\begin{remark}
One may also consider the sum of the $k$ largest eigenvalues of $L^{\dagger}$ as a performance measure. This is equivalent to evaluate the $k$ slowest modes of the network, which are the most energetic modes. This measure satisfies properties of Definition \ref{def-homo-systemic}  as it is convex and symmetric with respect to Laplacian eigenvalues  (c.f. \cite[Ch. 5.2]{borwein} and \cite{boyd2006}). 
\end{remark}

}

\subsection{Hardy-Schatten or $\mathcal{H}_p$ System Norms}
The $\mathcal H_p$-norm of networks (\ref{first-order-G})  for $2 \leq p \leq \infty $ is defined by 
	\begin{equation}
	\|G\|_{\mathcal H_p}~:=~ \left( \frac{1}{2\pi} \int_{-\infty}^{\infty} \sum_{k=1}^n \sigma_k(G(j \omega))^p  \hspace{0.05cm} d\omega \right )^{\frac{1}{p}},
	\label{h_p}
	\end{equation}
where $G$ is the transfer matrix  from $\xi(t)$ to $y(t)$ and  $\sigma_k(j \omega)$ for $k =1,\ldots,n$ are singular values of $G(j \omega)$. 
To ensure well-definedness of performance measure (\ref{h_p}), the marginally stable mode of the network must be unobservable through the  output. Thus, this performance measure remains well-defined as long as the coupling graph of the network stays connected. 
This class of system norms captures several important performance and robustness features of linear control systems. For instance, a direct calculation reveals that the  $\mathcal H_{2}$-norm of network  (\ref{first-order-G}) is 	
\begin{equation}
		\|G\|_{\mathcal H_2}~=~\left(\frac{1}{2}\sum_{i=2}^n\lambda_i^{-1}\right)^{\frac{1}{2}}.
	\end{equation}
This system norm quantifies the quality of noise propagation  throughout the network \cite{Siami14arxiv}. The $\mathcal H_{\infty}$-norm of a network is an  input-output system norm and its value for network (\ref{first-order-G}) is
	\begin{equation}
		\|G\|_{\mathcal H_{\infty}} ~=~ \lambda_2^{-1},
	\end{equation}
where $\lambda_2$ is known as the algebraic connectivity of the network \cite{Olfati}.  The value of $\mathcal H_{\infty}$-norm of network  (\ref{first-order-G}) can be interpreted as the worst attainable performance for all square integrable disturbance inputs. 

In \cite{SiamiSystemic}, the authors prove that the $\mathcal H_p$-norm of a  given network $\NN(L)$ is given by
		\begin{equation}
			\theta_p (L) ~:=~ \|G\|_{\mathcal H_p} = \alpha_0 \Big( \zeta_{p-1}(L) \Big)^{1-\frac{1}{p}}
		\label{sys-meas-H-p}
	\end{equation}
in which $\alpha_0^{-1}=\sqrt[p]{-\beta(\frac{p}{2},-\frac{1}{2})}$ and ${\beta}:\mathbb{R}\times \mathbb{R} \rightarrow \mathbb{R}$ is the well-known Beta function\footnote{$ \beta(x,y)=\int _{0}^{1}t^{x-1}(1-t)^{y-1}\,\mathrm {d} t$
for $\Rel\{x\}, \Rel \{y\} > 0$.}. Moreover, this measure is a homogeneous systemic performance measure for all $2 \leq p \leq \infty$.

\subsection{Local deviation error:} 
In network \eqref{first-order-G}, the local deviation of subsystem $i$ is equal to the deviation of the state of subsystem $i$ from the weighted average of states of its immediate neighbors, which can be formally defined by
	\begin{equation}
		\varepsilon_i(t)~:=~x_i(t)-\frac{1}{d_i} \sum_{e=\{i,j\} \in \mathcal E} w(e) \hspace{0.05cm} x_j(t).
		\label{eq:463}
	\end{equation}
The expected cumulative local deviation is then defined by
\begin{equation}
		\Delta(L) ~=~ \lim_{t \rightarrow \infty} \E \left [\sum_{i=1}^n \varepsilon_i(t)^2 \right]
	\label{global-measure}
	\end{equation}
with respect to input $\xi$ being a white noise process with identity covariance.  The notion of local deviation  can be extended and defined for velocity variables in the second-order consensus network \eqref{formation-1}-\eqref{formation-2} (c.f.,  \cite{Siami13cdc}) as follows
\begin{equation}
		\varsigma_i(t)~:=~v_i(t)-\frac{1}{d_i} \sum_{e=\{i,j\} \in \mathcal E} w(e) \hspace{0.05cm} v_j(t)
		\label{eq:463-2}
	\end{equation}
that is equal to the deviation of the velocity of subsystem $i$ from the weighted average of velocities of its neighbors. The expected cumulative local deviation is then given by
\begin{equation}
		\Upsilon(L) ~=~ \lim_{t \rightarrow \infty} \E \left [\sum_{i=1}^n \varsigma_i(t)^2 \right],
	\label{global-measure-2}
	\end{equation}
where it is assumed that input $\xi$ in network model \eqref{formation-1}-\eqref{formation-2} is a white noise process with identity covariance.

	\begin{theorem}
	\label{2-thm}
		The operators $\Delta, \Upsilon: \LL \rightarrow \R_+$ defined by (\ref{global-measure}) and \eqref{global-measure-2} are homogeneous systemic performance measures. Moreover, they can also be characterized as \begin{equation}
		\Delta(L) = \frac{1}{2} \sum_{i=1}^n {d_i}^{-1}
	\label{hahaha}
	\end{equation}
and
\begin{equation}
		\Upsilon(L) = \frac{1}{2 \beta} \sum_{i=1}^n {d_i}^{-2}
	\label{hahaha-2}
	\end{equation}
in which  $d_i$ is the degree of node $i \in \mathcal V$.
	\end{theorem}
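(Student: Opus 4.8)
The plan is to establish two things for each of the operators $\Delta$ and $\Upsilon$: first, the closed-form spectral (really, degree-based) representations \eqref{hahaha} and \eqref{hahaha-2}; and second, that these formulas define homogeneous systemic performance measures in the sense of Definition \ref{def-homo-systemic}. I would do the representation first, because once we have $\Delta(L)=\tfrac12\sum_i d_i^{-1}$ and $\Upsilon(L)=\tfrac{1}{2\beta}\sum_i d_i^{-2}$ the verification of the axioms reduces to elementary properties of the function $t\mapsto t^{-1}$ (resp. $t\mapsto t^{-2}$) composed with the degree map.

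For the representation of $\Delta$, I would start from the definition \eqref{global-measure}: the local deviation vector is $\varepsilon(t) = (I_n - D^{-1}A)x(t) = D^{-1}L\,x(t)$, where $D=\diag[d_1,\dots,d_n]$ and $A$ is the adjacency matrix, so $\varepsilon = D^{-1}Lx$. Since $\xi$ is white noise with identity covariance and the dynamics are $\dot x = -Lx+\xi$, the disagreement component $x_d = M_n x$ has steady-state covariance $Y_\infty$ solving the Lyapunov equation $L_d Y_\infty + Y_\infty L_d = M_n$ with $L_d = L+\tfrac1n J_n$; on the disagreement subspace this gives $Y_\infty = \tfrac12 L^\dagger$. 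Then
\begin{equation*}
\Delta(L) = \lim_{t\to\infty}\E\big[\varepsilon(t)^{\text T}\varepsilon(t)\big] = \tr\big(D^{-1}L\,Y_\infty\,L^{\text T}D^{-1}\big) = \tfrac12\tr\big(D^{-1}L L^\dagger L D^{-1}\big).
\end{equation*}
Using $LL^\dagger L = L$ and $\tr(D^{-1}LD^{-1}) = \tr(D^{-2}L) = \sum_i d_i^{-2} l_{ii} = \sum_i d_i^{-1}$ (since $l_{ii}=d_i$), we get $\Delta(L)=\tfrac12\sum_i d_i^{-1}$. The argument for $\Upsilon$ is the same but carried out on the second-order model \eqref{formation-1}-\eqref{formation-2}: the velocity local-deviation vector is again $D^{-1}L$ applied to the velocity state, and the relevant steady-state velocity covariance contributes the extra factor $\tfrac{1}{\beta}$ and an extra power of $L^\dagger$, yielding $\tr(D^{-2})$-type terms that collapse to $\tfrac{1}{2\beta}\sum_i d_i^{-2}$.

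For the axioms, with the closed forms in hand: homogeneity is immediate since scaling $L\mapsto\kappa L$ scales every degree $d_i\mapsto\kappa d_i$, so $\Delta(\kappa L)=\kappa^{-1}\Delta(L)$ (order $\alpha=1$) and $\Upsilon(\kappa L)=\kappa^{-2}\Upsilon(L)$ (order $\alpha=2$). For monotonicity, if $L_2\preceq L_1$ then comparing diagonal entries via $e_i^{\text T}L_2 e_i \le e_i^{\text T}L_1 e_i$ gives $d_i^{(2)}\le d_i^{(1)}$ for every $i$, and since $t\mapsto t^{-1}$ and $t\mapsto t^{-2}$ are decreasing on $\R_{++}$ we conclude $\Delta(L_1)\le\Delta(L_2)$ and likewise for $\Upsilon$. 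For convexity, the degree $d_i = e_i^{\text T}Le_i$ is a linear function of $L$, and $t\mapsto t^{-1}$, $t\mapsto t^{-2}$ are convex on $\R_{++}$; a convex function composed with an affine map is convex, and a nonnegative sum of convex functions is convex, so both measures are convex on $\LL$.

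The main obstacle I anticipate is the correct bookkeeping in the covariance/Lyapunov computation — in particular handling the marginally stable consensus mode properly (the passage to the disagreement form and the identification $Y_\infty=\tfrac12 L^\dagger$), and, for $\Upsilon$, tracking the second-order state-space structure so that the damping parameter $\beta$ and the extra pseudo-inverse factor land in the right place. The algebraic collapse $\tr(D^{-2}L)=\sum_i d_i^{-1}$ is routine once $LL^\dagger L=L$ is invoked, but it is worth stating explicitly since it is the crux of why the Laplacian-pseudoinverse expression reduces to a purely local degree sum.
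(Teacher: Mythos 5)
Your proposal is correct and follows essentially the same route as the paper: both express the local-deviation vector as $\diag[d_1^{-1},\dots,d_n^{-1}]\,L\,x$, reduce the steady-state expectation to $\tfrac12\tr(L^\dagger L D^{-2} L)=\tfrac12\tr(D^{-2}L)=\tfrac12\sum_i d_i^{-1}$ (the paper invokes a cited steady-state covariance theorem where you solve the disagreement Lyapunov equation explicitly), and then verify homogeneity, monotonicity via $e_i^{\rm T}L e_i$, and convexity via convexity of $t^{-1}$ and $t^{-2}$ composed with the linear degree map exactly as the paper does.
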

	
\begin{proof}
	Let us define the total local deviation at time $t$ by
\begin{eqnarray}
\varepsilon_{\rm total}(t)~:=~\sum_{i \in \mathcal V} \varepsilon_i(t)^2.
\label{tot-loc}
\end{eqnarray}
We reformulate \eqref{eq:463} as
	\begin{eqnarray}
		\varepsilon_i(t)&=&d_i^{-1} \left(d_i x_i(t)- \sum_{e=\{i,j\} \in \mathcal E} w(e) \hspace{0.05cm} x_j(t)\right) \nonumber \\
		&=& d_i^{-1}  \sum_{e=\{i,j\} \in \mathcal E} w(e) \left( x_i(t) -  x_j(t)\right).
	\end{eqnarray}
Therefore, we get
\[ \varepsilon(t) = \diag \big [d_1^{-1}, \cdots, d_n^{-1} \big ] L \, x(t).\]
where $\varepsilon(t)$ is concatenation of elements $\varepsilon_i(t)$ for $i=1,\ldots,n$.	
Also, we can rewrite (\ref{tot-loc}) as follows
\[
\varepsilon_{\rm total}(t)~=~\varepsilon^{\rm T}(t) \varepsilon(t) ~=~ x^{\rm T}(t) Q x(t)
\]
with
\[Q~=~ L \, \diag \left [d_1^{-2}, \cdots, d_n^{-2} \right ] \, L.\]
Thus, according to  \cite[Thm. 5]{Siami14arxiv} the steady-state of $\varepsilon_{\rm total}$ is given by
\begin{eqnarray}
\Delta(L)&=&\lim_{t \rightarrow \infty}\E \left [\varepsilon_{\rm total}(t) \right]=\frac{1}{2}\tr \big( L^{\dag}Q\big)\nonumber \\
&=&\frac{1}{2}\sum_{i \in \mathcal V} {d_i}^{-1}.
\label{loc}
\end{eqnarray}
Now we show this measure is a homogeneous systemic performance measure. We first show that \eqref{loc} has property 1, which means 
\[ \Delta(\kappa L)~=~\frac{1}{2}\sum_{i \in \mathcal V} {(\kappa d_i)}^{-1}~=~\kappa^{-1} \Delta(L).\]
Furthermore, it is monotone, because if $L_1 \preceq L_2$ then we have 
\[ e_i^{\rm T} L_1 e_i \leq e_i^{\rm T} L_2 e_i,\]
where $e_i$ for $i=1,\ldots,n$ are the standard  basis for the $n$-dimensional Euclidean space.  Therefore, we have 
$L_1(i,i) \leq L_2(i,i)$ that guarantees the monotonicity of $\Delta$. Moreover, its convexity follows from convexity of function $1/x$ for all $x \in \R_+$.  Because consider two Laplacian matrices $L_1$
and $L_2$ with node degrees $d^{(1)}_i$ and $d^{(2)}_i$, respectively,  for $i=1,\ldots,n$. Then,  we get
\begin{eqnarray}
\Delta(cL_1 + (1-c)L_2)&=& \sum_{i \in \mathcal V} \frac{1}{c\,d_i^{(1)}+(1-c)\,d_i^{(2)}} \nonumber \\
 &\leq & \sum_{i \in \mathcal V} \left (\frac{c}{d_i^{(1)}}+\frac{1-c}{d_i^{(2)}} \right) \nonumber \\
  &= & \sum_{i \in \mathcal V} \frac{c} {d_i^{(1)}}+\sum_{i \in \mathcal V} \frac{1-c}{d_i^{(2)}} \nonumber \\
    &= & {c} \Delta(L_1)+(1-c) \Delta(L_2) \nonumber 
 \end{eqnarray}
for all $0\leq c \leq 1$.  This completes the proof of the first part. For the second part, let us define the total local deviation error at time $t$ by
\begin{equation}
\varsigma_{\rm total}(t)~:=~\sum_{i \in \mathcal V} \varsigma_i(t)^2.
\label{tot-loc-2}
\end{equation}
We similarly reformulate \eqref{eq:463-2} as
	\begin{eqnarray*}
		\varsigma_i(t)&=&d_i^{-1} \left(d_i v_i(t)- \sum_{e=\{i,j\} \in \mathcal E} w(e) \hspace{0.05cm} v_j(t)\right) \nonumber \\
		&=& d_i^{-1}  \sum_{e=\{i,j\} \in \mathcal E} w(e) \left( v_i(t) -  v_j(t)\right).
	\end{eqnarray*}
Therefore, we have
\[ \varsigma(t) = \diag \big [d_1^{-1}, \cdots, d_n^{-1} \big ] L \, v(t),\]
where $\varsigma(t)$ is concatenation of elements $\varsigma_i(t)$ for all $1\leq i \leq n$.	
Moreover, we can rewrite (\ref{tot-loc-2}) as follows
\begin{eqnarray}
\varsigma_{\rm total}(t)&=&\varsigma^{\rm T}(t) \varsigma(t) \nonumber \\
&=& v^{\rm T}(t) Q v(t),
\end{eqnarray}
where $Q$ is given by $Q~=~ L \, \diag \left [d_1^{-2}, \cdots, d_n^{-2} \right ] \, L$.
Therefore, the steady-state of $\varsigma_{\rm total}$ can be characterized as
\begin{eqnarray}
\Upsilon(L)&=&\lim_{t \rightarrow \infty}\E \left [\varsigma_{\rm total}(t) \right]=\frac{1}{2\beta }\tr \big( (L^{\dag})^2Q\big)\nonumber \\
&=&\frac{1}{2\beta }\sum_{i \in \mathcal V} {d_i}^{-2}.
\label{loc-2}
\end{eqnarray}
This measure is a homogeneous systemic performance measure. It is straightforward to  show that \eqref{loc-2} satisfies property 1 by verifying that 
\[ \Upsilon(\kappa L)~=~\frac{1}{2\beta }\sum_{i \in \mathcal V} {(\kappa d_i)}^{-2}~=~\kappa^{-2} \Upsilon(L).\]
It is monotone, as if $L_1 \preceq L_2$, then we have 
\[ e_i^{\rm T} L_1 e_i \leq e_i^{\rm T} L_2 e_i.\]
As a result, it follows that
$L_1(i,i) \leq L_2(i,i)$ that guarantees the monotonicity of $\Upsilon$. Finally, its convexity can be concluded from convexity of function $1/x^2$ for all  $x \in \R_+$. 
	\end{proof}

\begin{remark}
For first-order consensus network (\ref{first-order-G}) that are defined over $d$-regular coupling graphs, the corresponding microscopic measure (\ref{hahaha}) scales linearly with network size. For regular lattices that are $d$-regular graphs, our result assumes the reported result of \cite{Bamieh12} as its special case.
\end{remark}

\begin{figure}[t]
\centering
\begin{tikzpicture}[scale=1.4]
\draw [fill] (-4,0) circle [radius=0.1];
\draw [fill] (-3.5,0) circle [radius=0.1];
\draw [fill] (-3,0) circle [radius=0.1];
\draw [fill] (-4,.5) circle [radius=0.1];
\draw [fill] (-3.5,0.5) circle [radius=0.1];
\draw [fill] (-3,0.5) circle [radius=0.1];
\draw [ thick] (-4,0) -- (-4,0.5);
\draw [ thick] (-3.5,0) -- (-3.5,0.5);
\draw [ thick] (-3,0) -- (-3,0.5);
\draw [ thick] (-3.5,0) -- (-3,.5);
\draw [ thick] (-4,0) -- (-3,0);
\draw [ thick] (-4,0.5) -- (-3.5,0.5);
\draw [fill] (0,0) circle [radius=0.1];
\draw [fill] (.5,0) circle [radius=0.1];
\draw [fill] (.25,0.25 ) circle [radius=0.1];
\draw [fill] (1,0) circle [radius=0.1];
\draw [fill] (0,.5) circle [radius=0.1];
\draw [fill] (.5,0.5) circle [radius=0.1];
\draw [ thick] (0,0) -- (0,0.5);
\draw [ thick] (0,0) -- (0.5,0.5);
\draw [ thick] (.5,0) -- (1,0);
\draw [ thick] (.5,0) -- (.5,.5);
\draw [ thick] (0,0) -- (0.5,0);
\draw [ thick] (0,.5) -- (0.5,.5);
\node[] at (-3.5,-1) {$(a)$};
\node[] at (.5,-1) {$(b)$};
\end{tikzpicture}
  	\caption{{Two isospectral graphs with six nodes \cite{newman2001laplacian}.}}
  	\label{figiso}
\end{figure}
\begin{remark}
Fig. \ref{figiso} shows example of two isospectral\footnote{Two graphs are called isospectral if and only if  their Laplacian matrices have the same multi-sets of eigenvalues} graphs that are not isometric\footnote{This means that their adjacency matrices are not permutation-similar.}.  While the value of a spectral systemic performance measure is equal for both graphs, the value of an expected cumulative local deviation measure is different for each of these graphs and depend on their specific interconnection topology. This simple observation implies that systemic performance measures (\ref{hahaha}) and (\ref{hahaha-2}) are suitable tools to differentiate among networks with isospectral coupling graphs. 
\end{remark}

\section{Abstraction with Guaranteed Bounds}
\label{spars}
In this section, we develop a fast abstraction algorithm for the class of linear consensus networks \eqref{first-order-G} with guaranteed bounds with respect to the class of homogeneous systemic performance measures.  

\subsection{Intrinsic Tradeoffs on the Best Achievable  Abstractions}
The abstraction goals are to reduce the number of feedback links while preserving a desired level of performance.  From notation \eqref{local-measure}, one can easily verify that the value of $\mathcal{S}_{0,1}$-measure is equal to the maximum of $|\mathcal{N}(i)|$ for all nodes $i=1,\ldots,n$, which makes it a suitable surrogate for design parameter $d$. The next result reveals an inherent interplay between sparsity and performance. 

\begin{theorem}\label{fund-trade-thm}
For a given network \eqref{first-order-G} that is endowed with a homogenous systemic performance measure $\rho: \LL \rightarrow \R_+$ of order $-\alpha$, suppose that $w_*= \max_{e \in \mathcal E} w(e)$. Then, there are fundamental tradeoffs between normalized performance and graph sparsity measures in the following sense 
\begin{equation}
\Pi_{\rho}(L)~\|A\|_{\ell_0}~\geq~ 2\varrho^*  (n-1), \label{tradeoff-1}
\end{equation}
and 
\begin{equation}
\Pi_{\rho}(L)~\|A\|_{\mathcal{S}_{0,1}}\geq~2\varrho^*\label{tradeoff-2}
\end{equation}
when $n > 2$, in which $A$ is the adjacency matrix of the coupling graph and $\varrho^*=w_*^{-1}\Pi_{\rho}(L_{\mathcal{K}_n})$, where $L_{\mathcal{K}_n}$ is Laplacian matrix of the unweighted complete graph. 
\end{theorem}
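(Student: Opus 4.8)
The plan is to derive both inequalities from a single scalar lower bound, namely $\Pi_{\rho}(L)\geq \varrho^*$, which is obtained purely from the monotonicity and homogeneity axioms of Definition \ref{def-homo-systemic}, and then to combine it with two elementary combinatorial facts about connected graphs on more than two nodes. Convexity (Property 3) plays no role.

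First I would record the two identities $\|A\|_{\ell_0}=2|\EE|$ and $\|A\|_{\mathcal S_{0,1}}=\max_{i}|\mathcal N(i)|$, which follow at once from \eqref{sparsity-measure-1}, \eqref{local-measure}, the definition of the adjacency matrix, and its symmetry under Assumption \ref{assump-simple}. The key step is a positive semidefinite ordering against a scaled complete graph: writing $L=\sum_{e=\{i,j\}\in\EE}w(e)(e_i-e_j)(e_i-e_j)^{\mathrm T}$ and using $0<w(e)\leq w_*$ together with $\EE\subseteq$ the edge set of $\mathcal K_n$, every omitted rank-one term $(e_i-e_j)(e_i-e_j)^{\mathrm T}$ is positive semidefinite, so
\[
 L\;\preceq\; w_*\!\!\sum_{\{i,j\}\in\EE}\!(e_i-e_j)(e_i-e_j)^{\mathrm T}\;\preceq\; w_*\,L_{\mathcal K_n}.
\]
Since $w_*L_{\mathcal K_n}\in\LL$ (it is the Laplacian of the complete graph with all weights equal to $w_*$), monotonicity gives $\rho(L)\geq\rho(w_*L_{\mathcal K_n})$, and homogeneity of order $-\alpha$ gives $\Pi_{\rho}(w_*L_{\mathcal K_n})=w_*^{-1}\Pi_{\rho}(L_{\mathcal K_n})=\varrho^*$. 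Hence $\Pi_{\rho}(L)\geq\varrho^*$.

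Finally I would multiply this scalar bound by each sparsity quantity. A connected graph on $n$ nodes has at least $n-1$ links, so $\|A\|_{\ell_0}=2|\EE|\geq 2(n-1)$, and multiplying by $\Pi_{\rho}(L)\geq\varrho^*$ yields \eqref{tradeoff-1}. For \eqref{tradeoff-2}, from $\sum_i|\mathcal N(i)|=2|\EE|\geq 2(n-1)$ the average degree of $\G$ strictly exceeds $1$ exactly when $n>2$, which (since degrees are integers) forces $\|A\|_{\mathcal S_{0,1}}=\max_i|\mathcal N(i)|\geq 2$; multiplying by $\Pi_{\rho}(L)\geq\varrho^*$ yields \eqref{tradeoff-2}.

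The step I expect to require the most care is the use of the homogeneity axiom when $w_*<1$, since Property 1 is stated only for ratios $\kappa>1$; this is resolved by observing that homogeneity of order $-\alpha$ for every $\kappa>1$ forces the same relation for every $\kappa>0$ via the substitution $L\mapsto\kappa^{-1}L$. A secondary point worth spelling out explicitly is the combinatorial claim that a connected simple graph on $n>2$ vertices must contain a vertex of degree at least $2$ — precisely the hypothesis under which \eqref{tradeoff-2} is asserted.
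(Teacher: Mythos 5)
Your proof is correct and follows essentially the same route as the paper: both arguments rest on the ordering $L \preceq w_* L_{\mathcal{K}_n}$ combined with monotonicity and homogeneity, together with the combinatorial bounds $\|A\|_{\ell_0}\geq 2(n-1)$ and $\|A\|_{\mathcal{S}_{0,1}}\geq 2$; the only cosmetic difference is that you factor out the scalar bound $\Pi_{\rho}(L)\geq\varrho^*$ first, whereas the paper folds the combinatorial factor into the positive semidefinite chain before applying $\rho$. Your extra remark that homogeneity stated for $\kappa>1$ extends to all $\kappa>0$ is a legitimate point the paper leaves implicit, and your justification is valid.
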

\begin{proof}
Since it is assumed that the coupling graph of the network is connected, the sparsity measure $\|A\|_{\ell_0}$  is always bounded from below by $2(n-1)$; with equality sign if the coupling graph is a {tree}. Thus, the following inequality holds on the cone of positive semidefinite matrices 
\[ L ~\preceq~   \frac{\|A\|_{\ell_0}}{2(n-1)} \hspace{0.05cm} L ~\preceq~ \frac{w_* \|A\|_{\ell_0}}{2(n-1)} \hspace{0.05cm}L_{\mathcal{K}_n}.  \]
From monotonicity property, it follows that 
\[ \rho(L) \hspace{0.05cm} \|A\|_{\ell_0}^{\alpha} ~\geq~ {2^{\alpha}}(n-1)^{\alpha} w_*^{-\alpha}\hspace{0.05cm}  \rho(L_{\mathcal{K}_n}). \]
By taking $\alpha$'th root from both sides, one can conclude the desired inequality \eqref{tradeoff-1}. When $n > 2$, the localized sparsity measure $\|A\|_{\mathcal{S}_{0,1}}$ is always greater of equal to $2$. Therefore, the following relation holds
\[ L ~\preceq~   \frac{1}{2} \hspace{0.05cm} \|A\|_{\mathcal{S}_{0,1}} \hspace{0.05cm} L~\preceq~   \frac{1}{2} \hspace{0.05cm} w_* \|A\|_{\mathcal{S}_{0,1}} \hspace{0.05cm}L_{\mathcal{K}_n}.  \]
By utilizing the monotonicity property, we get
\[ \rho(L) \hspace{0.05cm} \|A\|_{\mathcal{S}_{0,1}}^{\alpha}  ~\geq~2^{\alpha} w_*^{-\alpha}\hspace{0.05cm}  \rho(L_{\mathcal{K}_n}).\]
The desired inequality \eqref{tradeoff-2} follows from taking $\alpha$'th root from both sides of the inequality.  
\end{proof}	

The monotonicity property of a  systemic performance measure implies that link removal will lead to performance deterioration. Theorem \ref{fund-trade-thm} quantifies this inherent interplay by saying that sparsity and performance cannot be improved indefinitely both at the same time. As we will see in the following subsection, this is exactly why we need to perform reweighing after link elimination procedure in order to achieve an approximation that meets \eqref{rho-approx}.


\subsection{Existence and Algorithms}
\label{sec:existence}

The next theorem enables us to harness the monotonicity property of homogeneous systemic measures in our network approximations.

\begin{theorem}
\label{th-abstract}
Suppose that two linear consensus networks $\NN(L)$ and $\NN(L_{s})$ are endowed with a homogeneous systemic performance measure $\rhoo:\LL \rightarrow \R_+$ of order $-\alpha$. For a given constant $\epsilon \in (0,1)$, the two networks are $\epsilon$-approximation of each other, i.e., property \eqref{rho-approx} holds, if and only if their state matrices satisfy 
		\begin{equation}
			(1 - \epsilon) L ~\preceq~ L_s ~\preceq~(1 + \epsilon) L. \label{proximity-cone}
		\end{equation}
\end{theorem}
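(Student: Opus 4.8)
The plan is to prove the two implications separately, exploiting the fact that the normalized performance index $\Pi_\rho$ is itself a homogeneous systemic measure of order $-1$ (it inherits monotonicity from $\rho$ via the $\alpha$-th root, which is monotone on $\R_+$), so it suffices to work with $\Pi_\rho$ directly. For the ``if'' direction, assume \eqref{proximity-cone}. Applying the monotonicity property of $\Pi_\rho$ to the left inequality $(1-\epsilon)L \preceq L_s$ gives $\Pi_\rho(L_s) \le \Pi_\rho((1-\epsilon)L)$, and then homogeneity of order $-1$ gives $\Pi_\rho((1-\epsilon)L) = (1-\epsilon)^{-1}\Pi_\rho(L)$; hence $\Pi_\rho(L_s) \le (1-\epsilon)^{-1}\Pi_\rho(L)$. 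Symmetrically, the right inequality $L_s \preceq (1+\epsilon)L$ yields $\Pi_\rho((1+\epsilon)L) \le \Pi_\rho(L_s)$, i.e. $(1+\epsilon)^{-1}\Pi_\rho(L) \le \Pi_\rho(L_s)$. One small subtlety: homogeneity as stated in Definition \ref{def-homo-systemic} is only asserted for $\kappa > 1$, so to scale by $(1-\epsilon) < 1$ I would rewrite $L = (1-\epsilon)^{-1}\big((1-\epsilon)L\big)$ with $(1-\epsilon)^{-1} > 1$ and apply homogeneity to $(1-\epsilon)L$; similarly factor out $(1+\epsilon)^{-1}$ on the other side. Combining the two bounds,
\[
(1+\epsilon)^{-1}\Pi_\rho(L) ~\le~ \Pi_\rho(L_s) ~\le~ (1-\epsilon)^{-1}\Pi_\rho(L),
\]
and a direct algebraic manipulation shows this is equivalent to $\big|\Pi_\rho(L) - \Pi_\rho(L_s)\big| \le \epsilon\,\Pi_\rho(L_s)$, which is exactly \eqref{rho-approx}. (Concretely, the lower bound gives $\Pi_\rho(L) \le (1+\epsilon)\Pi_\rho(L_s)$, i.e. $\Pi_\rho(L)-\Pi_\rho(L_s) \le \epsilon\Pi_\rho(L_s)$; the upper bound gives $\Pi_\rho(L_s) \le (1-\epsilon)^{-1}\Pi_\rho(L)$, hence $(1-\epsilon)\Pi_\rho(L_s) \le \Pi_\rho(L)$, i.e. $-(\Pi_\rho(L)-\Pi_\rho(L_s)) \le \epsilon\Pi_\rho(L_s)$.)

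For the ``only if'' direction, the hard part is that we need a single inequality in the positive semidefinite cone to follow from a scalar inequality that is only assumed to hold ``for every homogeneous systemic measure $\rho$.'' The plan is to use a rich enough family of homogeneous systemic measures to separate matrices. Specifically, for any fixed unit vector $v \perp \mathbbm{1}_n$, the map $L \mapsto (v^{\text T} L^\dag v)$ restricted to $\LL$ — or more precisely a measure built from it, e.g. a suitable member of the Hardy–Schatten / spectral family or the quadratic form $v^{\text T}L^\dag v$ viewed through the monotone, convex, homogeneous lens — detects the ordering of $L^\dag$, and monotonicity reverses it to detect the ordering of $L$ on the subspace orthogonal to $\mathbbm{1}_n$. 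I would argue contrapositively: if \eqref{proximity-cone} fails, say $(1-\epsilon)L \not\preceq L_s$, then there is a vector $v \perp \mathbbm{1}_n$ with $v^{\text T}L_s v < (1-\epsilon)\, v^{\text T} L v$; feeding this into an appropriate homogeneous systemic measure (one whose value along the ``direction'' $v$ is governed by $v^{\text T}L^\dag v$, which we can build because such measures exist per Table \ref{tab} and the constructions of \cite{SiamiSystemic}) produces a violation of \eqref{rho-approx}. Running the same argument with the roles of $L$ and $L_s$ / the sign of $\epsilon$ swapped handles the failure of $L_s \preceq (1+\epsilon)L$.

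The main obstacle, and the place the proof needs the most care, is exactly this ``only if'' direction: turning ``\eqref{rho-approx} holds for all homogeneous systemic measures'' into the two-sided Löwner-order sandwich \eqref{proximity-cone}. The cleanest route is probably to invoke, for each pair of distinct indices or each test direction, a specific measure from the list in Table \ref{tab} (the spectral Riemann zeta measures $\zeta_q$, or the Hankel norm $\eta(L) = \frac12\lambda_2^{-1}$, which alone already constrains $\lambda_2$) together with the fact that knowing $\Pi_\rho(L_s)$ is within the prescribed factor of $\Pi_\rho(L)$ for a sufficiently expressive collection of such $\rho$ forces $L_s^\dag$ and $L^\dag$ — equivalently $L_s$ and $L$ on $\mathbbm{1}_n^\perp$ — into the claimed ordering; since both $L$ and $L_s$ are Laplacians of connected graphs they agree with their restrictions to $\mathbbm{1}_n^\perp$ plus the common kernel direction $\mathbbm{1}_n$, so the cone inequality on $\mathbbm{1}_n^\perp$ lifts to \eqref{proximity-cone} on all of $\R^n$. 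I would present the forward (``if'') direction in full detail as above, and for the converse lean on the monotonicity axiom plus the existence of the listed measures, pointing to \cite{SiamiSystemic} for the separation property.
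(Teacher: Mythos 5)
Your ``if'' direction is correct and is essentially the paper's argument: monotonicity plus homogeneity applied to the two sides of \eqref{proximity-cone} give
\[
(1+\epsilon)^{-1}\Pi_{\rho}(L) ~\leq~ \Pi_{\rho}(L_s) ~\leq~ (1-\epsilon)^{-1}\Pi_{\rho}(L),
\]
which rearranges to \eqref{rho-approx}. (The paper works with $\rho$ of order $-\alpha$ and then takes $\alpha$'th roots rather than working with $\Pi_\rho$ directly, and it does not bother with the $\kappa>1$ caveat you flag, but these are cosmetic differences.)

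The gap is in the ``only if'' direction. You correctly identify the right witness family, $\rho_v(L)=v^{\rm T}L^{\dag}v$ for $v\notin\Span\{\mathbbm 1_n\}$ — this is exactly what the paper uses: it asserts that each $\rho_v$ is a homogeneous systemic measure of order $-1$, applies \eqref{rho-approx} to it to get $(1+\epsilon)^{-1}v^{\rm T}L^{\dag}v \leq v^{\rm T}L_s^{\dag}v \leq (1-\epsilon)^{-1}v^{\rm T}L^{\dag}v$ for every such $v$, concludes $(1+\epsilon)^{-1}L^{\dag}\preceq L_s^{\dag}\preceq(1-\epsilon)^{-1}L^{\dag}$, and inverts on $\mathbbm 1_n^{\perp}$. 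But you never actually establish the one claim the whole direction hinges on, namely that $\rho_v$ satisfies the axioms of Definition \ref{def-homo-systemic} (homogeneity of order $-1$ is immediate; monotonicity requires that $L_2\preceq L_1$ implies $L_1^{\dag}\preceq L_2^{\dag}$ for connected-graph Laplacians with common kernel; convexity is the matrix-fractional-function fact). Without that verification the hypothesis ``\eqref{rho-approx} holds for every homogeneous systemic measure'' cannot be applied to $\rho_v$. Worse, your suggested fallback — using measures from Table \ref{tab} such as $\zeta_q$ or the Hankel norm — cannot work even in principle: every entry of that table is either a spectral function of the Laplacian eigenvalues (hence orthogonally invariant and blind to isospectral Laplacians with different eigenvectors) or a function of the node degrees alone, so no collection of them can certify the direction-dependent L\"owner sandwich \eqref{proximity-cone}. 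The non-spectral family $\rho_v$ is not optional here; it is the entire content of the converse, and it must be exhibited and verified explicitly.
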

\begin{proof}
According to the monotonicity and homogeneity properties of system measures, it follows that if (\ref{proximity-cone}) holds then we have 
\begin{equation} 
(1+ \epsilon)^{-\alpha} \rhoo(L) ~\leq~ \rhoo(L_s) ~\leq~ (1-\epsilon)^{-\alpha}\rhoo(L). 
\label{eq:432}
\end{equation}
Therefore, according to \eqref{eq:432} and \eqref{rho-approx}, $\NN(L_{s})$ is an $\epsilon$-approximation of $\NN(L)$. Let us consider the following measures
\begin{equation}
\rhoo_{v}(L) \, = \,  v^{\rm T} L^{\dag} v
\end{equation}
for all $v \in \R^n$. This operator is a homogeneous systemic performance measure of order $-1$. For all $v \notin \text{Span}\{\mathbbm 1\}$, inequality   \eqref{rho-approx} yields 
\[ -\epsilon \,\leq \, \frac{\rho_{v}(L)-\rho_{v}(L_s)}{\rho_{v}(L_s)} \, \leq \, \epsilon. \]
Thus, it follows that
\begin{equation}
(1+\epsilon)^{-1} \, \leq \, \frac{v^{\rm T}L_s^{\dag}v}{v^{\rm T}L^{\dag}v} \, \leq \, (1-\epsilon)^{-1}. \label{eq:598}
\end{equation}
Since $v^{\rm T}L^{\dag}v>0$, inequalities \eqref{eq:598} can be rewritten as \begin{equation}
  (1+\epsilon)^{-1}{v^{\rm T}L^{\dag}v} \, \leq \, {v^{\rm T}L_s^{\dag}v} \, \leq \, (1-\epsilon)^{-1}{v^{\rm T}L^{\dag}v}. 
 \label{eq:603}
 \end{equation}
We know that $L$ and $L_s$ are Laplacian matrices and \eqref{eq:603} holds for all $v \notin \text{Span}\{\mathbbm 1\}$; therefore, we get
\[ (1+\epsilon)^{-1} L^{\dag} \, ~\preceq~ \, L_s^{\dag} \, ~\preceq~ \,  (1-\epsilon)^{-1}L^{\dag}. \] 
This inequality can be rewritten to obtain the desired result
\[ (1-\epsilon)L \, ~\preceq~ \, L_s \, ~\preceq~ \, (1+\epsilon) L. \] 
\end{proof}	

The result of the above theorem is crucial as it enables us to take advantage of monotonicity property of systemic performance measures in our approximations. For two given networks $\NN(L_1)$ and $\NN(L_2)$, inequality $\rho(L_2) \leq  \rho(L_1)$ can be realized through several possible scenarios; for example, network $\NN(L_2)$ can be constructed by (i) adding new weighted edges to the coupling graph of network $\NN(L_1)$, (ii)    increasing weights of some of the existing links in network $\NN(L_1)$, (iii)  rewiring topology of network $\NN(L_1)$ while ensuring $L_1 \preceq  L_2$.

The next result proves existence of an abstraction for every given linear consensus network.

\begin{theorem}\label{th-main}
Suppose that a network $\NN(L)$ with coupling graph $\G=(\V, \EE,w)$ endowed with a homogeneous systemic performance measure $\rhoo:\LL \rightarrow \R_+$ of order $-\alpha$ and a design parameter $d >2$ are given. Then, it is possible to construct another network $\NN(L_{s})$ with coupling graph $\G_s=(\V, \EE_s,w_s)$ such that 

\noindent (i)  $\NN(L_s)$ is a $({ \frac{\sqrt{8d}}{d+2}}, d)$-abstraction of network $\NN(L)$;\\
\noindent (ii) $\EE_s \subset \EE$.
\end{theorem}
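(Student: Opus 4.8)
The plan is to recast the statement as a spectral sparsification problem and then feed it to a sparsifier construction with sharp constants. The first move is to apply Theorem~\ref{th-abstract}: for any $\epsilon\in(0,1)$, the network $\NN(L_s)$ satisfies the approximation inequality \eqref{rho-approx} for \emph{every} homogeneous systemic performance measure if and only if $(1-\epsilon)L\preceq L_s\preceq(1+\epsilon)L$. So, with the target value $\epsilon=\frac{\sqrt{8d}}{d+2}$ --- which lies in $(0,1)$ precisely because $(d+2)^2-8d=(d-2)^2>0$ for $d>2$ --- it is enough to exhibit a Laplacian $L_s\in\LL$ whose edge set is contained in $\EE$, whose graph carries at most $dn/2$ links, and which obeys this two-sided spectral sandwich. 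This reformulation is exactly what makes the monotonicity/homogeneity structure usable and lets us borrow tools from algebraic graph theory.

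Next I would invoke the twice-Ramanujan spectral sparsifier of Batson--Spielman--Srivastava \cite{Spielman,Spielman14}: for any $\delta>1$, the weighted graph $\G$ admits a reweighted subgraph $\G_s=(\V,\EE_s,w_s)$ with $\EE_s\subseteq\EE$, strictly positive weights $w_s$, at most $\lceil\delta(n-1)\rceil$ links, and Laplacian $L_H$ satisfying $L\preceq L_H\preceq\kappa_\delta L$ with $\kappa_\delta=\frac{\delta+1+2\sqrt{\delta}}{\delta+1-2\sqrt{\delta}}$. Applying this with $\delta=d/2$ (admissible since $d>2$) yields, after clearing denominators and using $4\sqrt{d/2}=\sqrt{8d}$, the value $\kappa=\frac{d+2+\sqrt{8d}}{d+2-\sqrt{8d}}$. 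I would then symmetrize the one-sided bound by a single global rescaling of the weights: set $L_s:=\frac{2}{\kappa+1}L_H$, so that $\frac{2}{\kappa+1}L\preceq L_s\preceq\frac{2\kappa}{\kappa+1}L$, and a short computation shows $\frac{2}{\kappa+1}=1-\epsilon$ and $\frac{2\kappa}{\kappa+1}=1+\epsilon$ with $\epsilon=\frac{\sqrt{8d}}{d+2}$, which is the required sandwich.

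What is left is bookkeeping. For the link count, $d>2$ makes $\delta=d/2>1$, hence $|\EE_s|\le\lceil\frac{d}{2}(n-1)\rceil\le\frac{d}{2}(n-1)+1=\frac{dn}{2}-\frac{d}{2}+1\le\frac{dn}{2}$, so property~(i) of Definition~\ref{abstract-def} holds; together with the sandwich above and Theorem~\ref{th-abstract} this shows $\NN(L_s)$ is a $\big(\tfrac{\sqrt{8d}}{d+2},d\big)$-abstraction, giving part~(i). For part~(ii), $\EE_s\subseteq\EE$ because a sparsifier is a subgraph of $\G$, and since the rescaling touches only the weights the topology is unchanged. Finally $L_s\in\LL$ since $L\preceq\frac{\kappa+1}{2}L_s$ together with connectedness of $\G$ forces $\Rank L_s=n-1$, so $\G_s$ is connected with positive weights.

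I expect the only nontrivial ingredient to be the Batson--Spielman--Srivastava theorem itself: the nonobvious fact is that the number of retained links can be kept linear in $n$ while the spectral distortion stays controlled by the explicit constant $\kappa_\delta$. Everything else --- the reduction via Theorem~\ref{th-abstract}, the symmetrizing rescaling, and the edge count --- is routine. A minor point to watch is that the abstraction definition asks for $\EE_s\subset\EE$, which is automatic here, and that reweighting does not affect sparsity; one might also remark that if a nearly-linear-time \emph{algorithm} producing such an $L_s$ is wanted (as in the subsequent sections), the same scheme goes through verbatim with a fast randomized sparsifier replacing the deterministic construction.
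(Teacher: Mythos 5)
Your proposal is correct and follows essentially the same route as the paper: reduce via Theorem~\ref{th-abstract} to the two-sided spectral sandwich $(1-\epsilon)L\preceq L_s\preceq(1+\epsilon)L$, invoke the Batson--Spielman--Srivastava twice-Ramanujan sparsifier with parameter $d/2$ to get $L\preceq L_{H}\preceq\frac{d+2+\sqrt{8d}}{d+2-\sqrt{8d}}L$ on $\lceil d(n-1)/2\rceil$ edges, and rescale to center the bound (your factor $\frac{2}{\kappa+1}$ equals the paper's $\frac{(1-\sqrt{d/2})^2}{1+d/2}=\frac{d+2-\sqrt{8d}}{d+2}$). Your additional checks --- that $\epsilon\in(0,1)$ because $(d-2)^2>0$, the explicit edge-count bound $\lceil\frac{d}{2}(n-1)\rceil\le\frac{dn}{2}$, and connectivity of $\G_s$ --- are welcome details the paper leaves implicit.
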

\begin{proof}
{ It is well known that the characteristic polynomial of $A + v v^{\rm T}$ can be computed based on the characteristic polynomial\footnote{ The characteristic polynomial of matrix $A \in \R^{n \times n}$ is defined by \[p_A(t) = \det \left(tI_n - A\right).\]} of $A$, its eigenvalues $\mu_i$, and eigenvectors $u_i$ as follows
\[ P_{A+ v v^{\rm T}}(x) = P_A(x) \left ( 1 - \sum_i \frac {\langle v, u_i \rangle^2}{x- \mu_i} \right) \]
in which $\langle v, u_i \rangle:= v^{\text T} u_i$.
 Moreover, we know that the eigenvalues of $A + v v^{\rm T}$ interlace those of $A$. Reference\cite{Spielman14} suggests a framework  which provides intuition as to where these new eigenvalues are located.  If we start with an empty graph on $n$ nodes, then after $M=\lceil {d(n-1)/2} \rceil$ iterations of choosing links from the original graph and adjusting its weight, it is shown that the eigenvalues of the resulted graph are controlled by maintaining two barrier potential functions. According to \cite[Th. 1.1]{Spielman14}, coupling graph $\G=(\V, \EE,w)$ has a weighted subgraph $\hat {\mathcal G}=(\V, \hat \EE, \hat w)$ with $|\hat \EE|=\lceil {d(n-1)}/2 \rceil$ that satisfies
\begin{equation}
L ~\preceq~ L_{\hat{\mathcal G}} ~\preceq~ \left(\frac{1+\sqrt{d/2}}{1-\sqrt{d/2}}\right)^2 L
\label{eq:497}
\end{equation}
where $L_{\hat{\mathcal G}}$ is the Laplacian matrix of graph $\hat{\mathcal G}$. We define $\mathcal G_s=(\V, \hat \EE, w_s)$ by its Laplacian matrix, which is given by
\begin{equation}
L_s ~:=~  \frac{(1-\sqrt{d/2})^2}{1+d/2} L_{\hat{\mathcal G}}.  
\label{eq:501}
\end{equation}
Therefore, according to \eqref{eq:497} and \eqref{eq:501}, it follows that 
\begin{equation}
\left(1-\frac{\sqrt{2d}}{d/2 +1}\right) L ~\preceq~ L_{s} ~\preceq~  \left(1+\frac{\sqrt{2d}}{d/2 + 1}\right) L.
 \label{eq:505}
\end{equation}
Using \eqref{eq:505} and Theorem \ref{th-abstract}, it yields that $\NN(L_{s})$ is a $(\frac{\sqrt{2d}}{d/2+1}, d)$-abstraction of $\NN(L)$. { Since $\mathcal E_s= \hat{\mathcal E}$ and the fact that $\hat{\mathcal E}$ is obtained by taking samples from $\mathcal E$, one concludes that $\mathcal E_s  \subset \mathcal E$.}}
\end{proof}	

In our next result, we show that every consensus network has a sparsification such that: (i) it yields a better systemic performance than the original network, and (ii) the total weight sum of the coupling graph of the spars network is controlled, i.e., it is less than a constant multiple, which is independent of the network size, of the total weight sum of the original network.

\begin{proposition}\label{coro-1}
For a given consensus network $\NN(L)$ with coupling graph $\G=(\V, \EE,w)$ and every $d > 2$, there exists a consensus network $\NN(L_s)$ with coupling graph $\G_s=(\V, \EE_s,w_s)$ that has at most $dn/2$ links and $\EE_s \subset \EE$. Moreover, we have  \\
\vspace{.1cm}
\noindent (i) the total weight of coupling graph of $\NN(L_{s})$ is controlled, i.e., 
\begin{equation}
			\sum_{e\in \EE_s} w_s(e) ~\leq~ {\left(\frac{\sqrt{2d}+2}{\sqrt{2d}-2}\right)^2} \sum_{e\in \EE} w(e);
		\label{same-order}
		\end{equation}
 \noindent (ii) $\NN(L_s)$ has a superior performance with respect to  $\NN(L)$, i.e., 
\[ \rhoo\left(L_{s}\right) ~\leq~ \rhoo\left(L\right),\]
		for every homogeneous systemic performance measure $\rhoo:\LL \rightarrow \R_+$.
\vspace{.1cm}
\end{proposition}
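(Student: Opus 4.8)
The plan is to recycle the spectral-sparsifier construction already used in the proof of Theorem~\ref{th-main}, but to drop the recentering rescale \eqref{eq:501}. First I would apply \cite[Th.~1.1]{Spielman14} verbatim to get a weighted subgraph $\hat{\mathcal G}=(\V,\hat\EE,\hat w)$ of $\G$ with $|\hat\EE|=\lceil d(n-1)/2\rceil$ obeying the sandwich \eqref{eq:497}. Then I would simply take $\G_s:=\hat{\mathcal G}$ (so $\EE_s=\hat\EE$, $w_s=\hat w$) and $L_s:=L_{\hat{\mathcal G}}$. Since $\hat\EE$ is sampled from $\EE$ we get $\EE_s\subset\EE$; and since $d>2$ we have $d(n-1)/2\le dn/2-1$, hence $|\EE_s|=\lceil d(n-1)/2\rceil\le\lceil dn/2-1\rceil=\lceil dn/2\rceil-1\le dn/2$, which is the required link bound. (That $\G_s$ is still connected, as demanded by Assumption~\ref{assum-on-graphs}, follows because $L\preceq L_s\preceq cL$ forces $\ker L_s=\ker L=\Span\{\mathbbm{1}_n\}$, which is one-dimensional.)

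Part (ii) is then immediate: the left-hand inequality of \eqref{eq:497} gives $L\preceq L_s$, so the monotonicity property (Property~2 of Definition~\ref{def-homo-systemic}) yields $\rho(L_s)\le\rho(L)$ for every homogeneous systemic performance measure $\rho$. This is exactly where keeping $L_s=L_{\hat{\mathcal G}}$ unscaled matters: the shrinking factor used in \eqref{eq:501} would break the ordering $L\preceq L_s$ and therefore would not certify superior performance.

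For part (i) I would convert the matrix bound into a trace bound. From $L=EWE^{\text T}$ one has $\tr(L)=\sum_{i=1}^n d_i=2\sum_{e\in\EE}w(e)$, and likewise $\tr(L_s)=2\sum_{e\in\EE_s}w_s(e)$. The right-hand inequality of \eqref{eq:497} says that $\big(\frac{1+\sqrt{d/2}}{1-\sqrt{d/2}}\big)^2L-L_s$ is positive semidefinite, so its trace is nonnegative, giving $\tr(L_s)\le\big(\frac{1+\sqrt{d/2}}{1-\sqrt{d/2}}\big)^2\tr(L)$. Dividing by $2$ and rewriting the constant via $\sqrt{d/2}=\sqrt{2d}/2$ turns $\big(\frac{1+\sqrt{d/2}}{1-\sqrt{d/2}}\big)^2$ into $\big(\frac{\sqrt{2d}+2}{\sqrt{2d}-2}\big)^2$, which is precisely \eqref{same-order}.

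I do not anticipate a genuine obstacle, since the statement is essentially a corollary of the construction behind Theorem~\ref{th-main}; the only points needing a little care are the edge-count arithmetic $\lceil d(n-1)/2\rceil\le dn/2$ (which genuinely uses $d>2$, so that $d/2>1$) and the algebraic simplification of the weight constant into the form appearing in \eqref{same-order}, both of which are routine.
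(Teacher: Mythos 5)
Your proof is correct, and it rests on the same key ingredient as the paper's — the Batson--Spielman--Srivastava sparsifier of \cite[Th.~1.1]{Spielman14}, i.e.\ the sandwich \eqref{eq:497} — but it reads the two conclusions off that sandwich more directly than the paper does. The paper instead routes everything through Theorem~\ref{th-main}: it starts from the already-rescaled abstraction $L_{\hat s}$, rescales once more by $\tfrac{d+2-\sqrt{8d}}{d+2}$ (which, composed with the scaling in \eqref{eq:501}, lands back essentially at your unscaled $L_{\hat{\mathcal G}}$), obtains part (ii) from the performance inequality \eqref{eq:533} combined with the homogeneity property, and obtains part (i) by applying \eqref{eq:533} to the auxiliary homogeneous systemic measure $\Xi(L)=\bigl(\sum_{e\in\EE}w(e)\bigr)^{-1}$ of order $-1$. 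Your version gets (ii) from monotonicity alone (since $L\preceq L_s$ holds without any rescaling, exactly as you observe) and (i) by taking traces of the right-hand side of \eqref{eq:497} and using $\tr(L)=2\sum_{e\in\EE}w(e)$; the constant $\bigl(\tfrac{1+\sqrt{d/2}}{1-\sqrt{d/2}}\bigr)^2=\bigl(\tfrac{\sqrt{2d}+2}{\sqrt{2d}-2}\bigr)^2$ comes out identically. What your route buys is economy and robustness: it needs only Property~2 of Definition~\ref{def-homo-systemic} (not Property~1), it avoids having to certify that $\Xi$ is a systemic measure, and it sidesteps the sign-sensitive bookkeeping of the double rescaling in the paper's argument (where the direction of \eqref{eq:537} is easy to get wrong). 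Your side remarks — the edge-count arithmetic using $d>2$ and the connectivity of $\G_s$ via $\ker L_s=\ker L=\Span\{\mathbbm{1}_n\}$ — are both sound.
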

\begin{proof}
{According to Theorem \ref{th-main}, $\NN(L)$ has a $(\frac{\sqrt{8d}}{d+2}, d)$-abstraction, $\NN(L_{\hat s})$, with $\mathcal E_{\hat s} \subset  \EE$, which means that we have 
\begin{equation}
		\left (1-\frac{\sqrt{2d}}{d/2 + 1}\right )^{\alpha} ~\leq~ \frac{\rhoo\left(L\right)}{\rhoo\left(L_{\hat s}\right)} ~\leq~ \left (1+\frac{\sqrt{2d}}{d/2 + 1} \right)^\alpha
		\label{eq:533}
\end{equation}
for every homogeneous systemic performance measure $\rhoo:\LL \rightarrow \R_+$ of order $-\alpha$. Let us define $L_s=\frac{d+2-\sqrt{8d}}{d + 2} {L}_{\hat s}$. Then, it follows that
\begin{equation}
\rhoo({L}_{\hat s})~=~\left(\frac{d + 2}{d+2-\sqrt{8d}}\right)^{\alpha} \rhoo(L_{s}).
\label{eq:537}
\end{equation}
Finally, by substituting $\rhoo({L}_{\hat s})$ from \eqref{eq:537} in the left-hand side inequality of \eqref{eq:533},  we obtain
\[ \rhoo\left(L_{s}\right) ~\leq~ \rhoo\left(L\right). \]
It is straightforward to verify that operator $\Xi(L)=\left(\sum_{e\in \EE} w(e)\right)^{-1}$ is a  homogeneous systemic performance measure with $\alpha=1$. From the left-hand side inequality of \eqref{eq:533}, we get the desired bound \eqref{same-order}.}
\end{proof}

{ 
\begin{remark}
Proposition \ref{coro-1} demonstrates a tradeoff between the largest possible number of links in an abstraction, i.e., upper bound of $|\EE_s|$, and the best achievable ratio of the total weights, i.e., upper bound of $\sum_{e\in \EE_s} w_s(e)/ \sum_{e\in \EE} w(e)$. We have that 
\begin{equation}
|\EE_s|~\leq~ \frac{1}{2}dn
\label{eq:1157}
\end{equation} 
and  
\begin{equation}
\frac{\sum_{e\in \EE_s} w_s(e)}{ \sum_{e\in \EE} w(e)} ~\leq~ \left(\frac{\sqrt{2d}+2}{\sqrt{2d}-2}\right)^2.
\label{eq:1162}
\end{equation} 
Let us fix $n$. By increasing the value of parameter $d$, the upper bound in \eqref{eq:1157} increases, but the upper bound in \eqref{eq:1162} decreases and gets closer to $1$. On the other hand, if $d \rightarrow 2$, then the upper bound in \eqref{eq:1162} tends to infinity. 
\end{remark}
}
 
{ We next employ a randomized algorithm to compute an $(\epsilon,d)$-abstraction of a given network. A randomized algorithm utilizes a degree of randomness as part of its logic. Randomization allows us to design provably accurate algorithms for problems that are massive and computationally expensive or NP-hard. For this aim based on \cite{Spielman}, we sample low-connectivity coupling links with high probability and high-connectivity coupling links with low probability. For a given consensus network $\NN(L)$ with $n$ nodes, we sample links of the coupling graph of this network $M:=\lceil {dn}/2 \rceil$ times in order to produce an $(\epsilon, d)$-abstraction. Let us denote probability of selecting a link $e \in \mathcal E$ by  $\pi(e)$\footnote{ It is well-know that $\sum_{e \in E}w(e)r(e)=n-1$; therefore, we have  $\sum_{e \in \mathcal E} \pi(e) = 1$.} that is proportional to $w(e)r(e)$, where $w(e)$ and $r(e)$ are the weight and the effective resistance of link $e$, respectively. In each step of sampling\footnote{ A sampling is a discrete probability distribution on a support $\mathcal E$ of all possible samples. The probability of selecting link $e$ is denoted by positive number $\pi(e)$ for all $e \in \mathcal E$. We also have $\sum_{e \in \mathcal E} \pi(e) = 1$, because $\pi$ is a probability distribution on $\mathcal E$.}, we add the selected link $e$ to the abstraction with weight $w(e)/(M \pi(e))$.  All details of our proposed algorithm are explained below. The following result, which is obtained based on a theorem in \cite[Thm. 1]{Spielman}, provides us with a certificate that the above randomized algorithm is capable of generating a proper abstraction of a given linear consensus network.

{
\begin{algorithm}[t]
    \SetKwInOut{Input}{Input}
    \SetKwInOut{Output}{Output}

{
    \Input{$\mathcal G=(\mathcal V, \mathcal E, w)$ and $r(e)$ for all $e \in \mathcal E$}
            \vspace{.2cm}
    \Output{$\mathcal G_s=(\mathcal V, \mathcal E_s, w_s)$ }
    		\vspace{.4cm}
}
   
    {\bf set} $\mathcal G_s$ to be the empty graph on $\mathcal V$ (\ie ~$\mathcal E_s := \{\}$ and $w_s(.):=0$) \\
    {\bf set} {$\pi(e)= \frac{w(e)r(e)}{n-1}$ for all $e \in \mathcal E$}\\
        \For{$j = 1$ {\it to} $M:=\lceil {dn}/2 \rceil$}
       { sample a link from $\mathcal E$ with probability distribution $\pi$ $\rightarrow$ $\{e\}$\\
       $\mathcal E_s = \mathcal E_s \cup \{e\}$\\
       $w_s(e) = w_s(e) + \frac{w(e)}{M \pi(e)}$}
        \Return $\mathcal G_s$\\
      
    \caption{Network Abstraction Algorithm}
\end{algorithm}	}

\begin{theorem}
\label{random_al}
Suppose that a linear consensus network $\NN(L)$ endowed with a homogeneous systemic performance measure $\rhoo:\LL \rightarrow \R_+$ of order $-\alpha$ is given and  a  permissible performance loss parameter $\epsilon \in ({1}/{\sqrt{n}},1]$ is fixed. Let us pick a real number $d$ that is at the order of $\epsilon^{-2} \log n$. Then, Algorithm 1 produces an $(\epsilon,d)$-abstraction of network $\NN(L)$, whose coupling graph is a subgraph of $\NN(L)$,  with probability at least $0.5$.
 
	\end{theorem}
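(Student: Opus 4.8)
The plan is to recognize Algorithm 1 as exactly the effective‑resistance sampling scheme of Spielman–Srivastava \cite{Spielman} and then to chain its spectral‑approximation guarantee with Theorem \ref{th-abstract}, which upgrades a spectral sandwich into the required bound for \emph{every} homogeneous systemic measure simultaneously. First I would line up the notation with \cite{Spielman}: write $L = EWE^{\text T} = \sum_{e \in \EE} w(e)\, b_e b_e^{\text T}$, where $b_e$ is the column of the oriented incidence matrix associated with link $e$ (so $b_e^{\text T} L^{\dag} b_e = r(e)$). Algorithm 1 draws $M = \lceil dn/2\rceil$ i.i.d. links $e_1,\dots,e_M$ from $\pi(e) = w(e) r(e)/(n-1)$, which is a genuine probability distribution because $\sum_{e\in\EE} w(e)r(e) = n-1$ (Foster's identity), and returns $L_s = \sum_{k=1}^{M} \frac{w(e_k)}{M\,\pi(e_k)}\, b_{e_k} b_{e_k}^{\text T}$, an unbiased estimator of $L$. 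The sampling weights $w(e)r(e)$ are, up to the factor $n-1$, the leverage scores of $L$, i.e.\ the diagonal of $\Pi := W^{1/2} E^{\text T} L^{\dag} E W^{1/2}$, which is precisely the regime where effective‑resistance sampling works.

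The crux is then a direct invocation of \cite[Thm.~1]{Spielman}. For $\epsilon \in (1/\sqrt n,\,1]$ and a number of samples $M$ of order $\epsilon^{-2} n\log n$ — which is exactly what a $d$ of order $\epsilon^{-2}\log n$ yields, since $M = \lceil dn/2\rceil$ — the output satisfies, with probability at least $1/2$,
\[ (1-\epsilon)\, L ~\preceq~ L_s ~\preceq~ (1+\epsilon)\, L. \]
(The mechanism is a matrix concentration estimate of Rudelson / matrix‑Bernstein type: the reweighting in Algorithm 1 produces a perturbation $\tilde\Pi$ of $\Pi$ with $\|\tilde\Pi - \Pi\|_2 \le \epsilon$ with probability at least $1/2$, and conjugating this bound back through $E W^{1/2}$ and $L^{\dag}$ gives the quadratic‑form sandwich above; the constraint $\epsilon > 1/\sqrt n$ is what makes the required sample count a legitimate integer and keeps the concentration tail under control.)

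It then remains to read off the three defining properties of an $(\epsilon,d)$‑abstraction. Property (ii) is immediate: by the ``if'' direction of Theorem \ref{th-abstract}, the spectral sandwich is equivalent to \eqref{rho-approx} holding for every homogeneous systemic performance measure $\rho$. Property (i) holds because distinct links are merged by the union in Algorithm 1, so $|\EE_s| \le M = \lceil dn/2\rceil$; the ceiling contributes only an $O(1)$ term absorbed into the ``order of'' specification of $d$ (equivalently, one may sample $\lfloor dn/2\rfloor$ times without affecting the previous step). Finally, every link placed into $\G_s$ was drawn from $\EE$, so $\EE_s \subset \EE$ and $\G_s$ is a subgraph of $\G$. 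Combining these facts on the probability‑$\tfrac12$ event of the spectral sandwich yields the claim.

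The only non‑routine ingredient is the matrix‑concentration bound underlying \cite[Thm.~1]{Spielman}; everything else is bookkeeping plus Theorem \ref{th-abstract}. If a self‑contained argument were desired, I would reprove that bound along the standard route: bound $\E\|\tilde\Pi - \Pi\|_2$ via symmetrization and a noncommutative Khintchine / matrix Bernstein inequality for sums of independent rank‑one terms, and then convert the expectation bound into a probability‑at‑least‑$\tfrac12$ statement through Markov's inequality.
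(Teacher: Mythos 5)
Your proposal is correct and follows essentially the same route as the paper's own proof: both identify Algorithm 1 with the Spielman--Srivastava effective-resistance sampling scheme, obtain the sandwich $(1-\epsilon)L \preceq L_s \preceq (1+\epsilon)L$ with probability at least $1/2$ via the Rudelson concentration bound on the leverage-score projection matrix combined with Markov's inequality, and then convert this into the bound for every homogeneous systemic measure by the ``if'' direction of Theorem~\ref{th-abstract}. Your extra remarks on the link count $|\EE_s|\le M=\lceil dn/2\rceil$ and the subgraph property are consistent with (and slightly more careful than) what the paper states.
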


}

\begin{proof}
Let us consider the following projection matrix
	\begin{equation}
	 P ~=~ W^{1/2} E L^{\dag} E^{\rm T} W^{1/2},
	 \label{}
	 \end{equation}
where $E$ is $m$-by-$n$ incidence matrix and $W$ is a diagonal matrix with link weights on its diagonal such that $L=E^{\rm T} W E$. 
The $m$-by-$m$ matrix $P$ has  eigenvalue at $0$ with multiplicity  $m-n+1$  and eigenvalue at $1$ with multiplicity  $n-1$ \cite[Lemma 3]{Spielman}.
Now, we show that  the sampling of links in Algorithm 1 corresponds to selecting $M = \mathcal O(n \log n/ \epsilon^2)$ columns at random from matrix $P$. Then by a concentration lemma of Rudelson \cite[Thm. 3.1]{Rudelson:2007} and Markov's inequality, {with probability at least $0.5$}, we get 
\begin{equation}
 \| P -P \Gamma P \|_2 ~\leq~ \epsilon,
 \label{eq:110}
 \end{equation}
where $\Gamma$ is a nonnegative diagonal matrix. Then, it is straightforward to show that for every homogeneous systemic {performance} measure $\rho: \LL \rightarrow \R_+$, we have
\begin{equation*}
	\left| \frac{\Pi_{\rho}(L)-\Pi_{\rho}(L_s)}{\Pi_{\rho}(L_s)}	\right|\leq \epsilon		
	\end{equation*}  
in which $L_s= E W^{1/2} \Gamma W^{1/2} E^{\rm T}$.
One can show that the inequality \eqref{eq:110} is equivalent to
 \begin{equation}
\sup_{x \in \R^m \atop x \neq 0 } \frac{| x^{\rm T} (P -P \Gamma P) x |}{x^{\rm T} x}~ \leq~ \epsilon.
\label{eq:1125}
 \end{equation}
Since $\text{Im}\{W^{1/2}  E\} \subset \R^m$, it follows that 
 \begin{eqnarray*}
 \sup_{x\in \text{Im}\{W^{1/2}  E\} \atop x \neq 0 } \frac{| x^{\rm T} (P -P \Gamma P) x |}{x^{\rm T} x} & \leq & \sup_{x \in \R^m \atop x \neq 0 } \frac{| x^{\rm T} (P -P \Gamma P) x |}{x^{\rm T} x} \\
& \leq & \epsilon.
\end{eqnarray*}
Let us define $x= W^{1/2} E x'$. Then, we can rewrite \eqref{eq:1125} as follows
 \begin{equation}
\sup_{x' \in \R^n \atop x'\notin \text{ker}\{W^{1/2}  E\}} \frac{| x'^{\rm T} (L - L_s) x' |}{x'^{\rm T} L x'} ~\leq~ \epsilon.
\label{eq:?}
 \end{equation} 
For all $x'\in \text{ker}\{W^{1/2}  E\}$, one gets $x'^{\rm T} L x' = x'^{\rm T} L_s x'=0$. As a result, it follows that 
 \begin{equation}
\sup_{x' \in \R^n \atop x'\neq 0} \frac{| x'^{\rm T} (L - L_s) x' |}{x'^{\rm T} L x'} ~\leq~ \epsilon,
 \end{equation}
 which implies that 
 \begin{equation}
 (1-\epsilon) L  ~\preceq~ L_s = E W^{1/2} \Gamma W^{1/2} E^{\rm T}  ~\preceq~ (1+\epsilon) L.
 \end{equation}
 Finally, using this and Theorem \ref{th-abstract}, we conclude the desired result. 
	\end{proof}
	
Algorithm 1 produces a network abstraction with $\mathcal O(n \log n/\epsilon^2)$ feedback links  in expectation and runs in approximately linear time $\tilde {\mathcal O}(m)$, where $m$ is the number of links  (c.f. \cite{Spielman:2004}). This favorable almost-linear-time complexity is achieved by having access to good approximations of all effective resistances. In {\cite{Spielman}}, the authors show that $\mathcal O(\log n)$ calls to a solver for a linear system of equations with symmetric diagonally dominant (SDD) matrix can provide sufficiently good approximations to all effective resistances. Moreover, it is shown in \cite{Spielman14} that a  spectral sparsification with $\mathcal O(n/\epsilon^2)$ links can be computed in $\mathcal O(n^3m/\epsilon^2)$ time by employing a slower deterministic algorithm for link selection.
The best known classical algorithm for calculating effective resistances relies on solving a Laplacian linear system and takes $ \tilde {\mathcal O}(m)$ time\cite{Spielman14,Fung}. 

\begin{remark}
By putting together results of Theorems \ref{th-main} and \ref{random_al},  we observe  an intrinsic tradeoff between the number of feedback links $M$ and the permissible performance loss parameter $\epsilon$. These two design factors move in opposite directions, i.e.,  if we decrease $M$, $\epsilon$ increases, and vice versa.  According to Theorems \ref{th-main} and \ref{random_al}, one can deduce that  the number of feedback links $M$ decreases inversely with the square of $\epsilon$. 
\end{remark}

\subsection{Guaranteed Performance  Bounds}

In the following, we show that our proposed abstraction algorithm approximately preserves frequency characteristics of the original (dense) network  (see Fig. \ref{h_2_error_block}). Our abstraction method shares some common roots with the classical model reduction techniques, where the objective is to find a  reduced model that yields  small $\mathcal H_2$-norm error (c.f., \cite{H2error}). Our first result gives  a tight upper bound on the $\mathcal H_2$-norm error of two linear consensus networks in terms of their Laplacian matrices.

\begin{lemma}
	\label{lemma-spars-norm}
Suppose that $\NN(L)$ and $\NN(\hat{L})$ are two given consensus networks governed by dynamics \eqref{first-order-G}. Then, we have
			\begin{equation} \frac{\| G-\hat{G}\|^2_{\mathcal H_2}}{\| G\|^2_{\mathcal H_2}} ~\leq~ \frac{\tr \big(\hat{L}^{\dag}+L^{\dag}-4(L+\hat{L})^{\dag}\big)}{\tr \left (L^\dag \right)}, \label{eq:908}
			\end{equation}
where $G(s)$ and $\hat{G}(s)$ are transfer matrices of $\NN(L)$ and $\NN(\hat{L})$ from input $\xi$ to output $y$ , respectively. 
	\end{lemma}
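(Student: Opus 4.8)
The plan is to convert both $\mathcal H_2$-norms and the error norm into traces of Laplacian pseudo-inverses, and then to control the single cross term that appears by one invocation of the Golden--Thompson inequality. After projecting out the unobservable consensus mode, the transfer matrix of $\NN(L)$ has matrix impulse response $g(t)=e^{-Lt}M_n=\sum_{i=2}^{n}e^{-\lambda_i t}u_iu_i^{\rm T}$ for $t\ge 0$ (and $\hat g(t)=e^{-\hat L t}M_n$ for $\NN(\hat L)$), so by Plancherel
\[
\|G\|_{\mathcal H_2}^2=\int_0^\infty \tr\!\big(g(t)g(t)^{\rm T}\big)\,dt=\tfrac12\,\tr\!\big(L^{\dag}\big),
\]
which matches the spectral formula $\|G\|_{\mathcal H_2}=\big(\tfrac12\sum_{i=2}^n\lambda_i^{-1}\big)^{1/2}$ recalled earlier, and likewise $\|\hat G\|_{\mathcal H_2}^2=\tfrac12\tr(\hat L^{\dag})$. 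Since $G$ and $\hat G$ are driven by the same disturbance and $g,\hat g$ are symmetric, expanding $\|g-\hat g\|_F^2$ and integrating gives
\[
\|G-\hat G\|_{\mathcal H_2}^2=\tfrac12\,\tr(L^{\dag})+\tfrac12\,\tr(\hat L^{\dag})-2\!\int_0^\infty\!\tr\!\big(e^{-Lt}M_n e^{-\hat L t}M_n\big)\,dt .
\]

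The heart of the argument is the claim that the cross term obeys $\int_0^\infty\tr(e^{-Lt}M_n e^{-\hat L t}M_n)\,dt\ge \tr\!\big((L+\hat L)^{\dag}\big)$. Here I would use that $M_n$, $e^{-Lt}$, $e^{-\hat Lt}$ and $e^{-(L+\hat L)t}$ all fix the eigenvector $\mathbbm 1_n$ (with eigenvalues $0,1,1,1$), which lets one peel off exactly one unit from each trace: $\tr(e^{-Lt}M_n e^{-\hat L t}M_n)=\tr(e^{-Lt}e^{-\hat L t})-1$ and $\tr(e^{-(L+\hat L)t}M_n)=\tr(e^{-(L+\hat L)t})-1$. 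Applying the Golden--Thompson inequality to the real symmetric matrices $-tL$ and $-t\hat L$ yields $\tr(e^{-(L+\hat L)t})\le\tr(e^{-Lt}e^{-\hat L t})$, hence $\tr(e^{-(L+\hat L)t}M_n)\le\tr(e^{-Lt}M_n e^{-\hat L t}M_n)$ for every $t\ge 0$. Both sides are nonnegative and integrable on $[0,\infty)$ because Assumption~\ref{assum-on-graphs} forces $\lambda_2,\hat\lambda_2>0$ and makes $L+\hat L$ a positive semidefinite matrix whose kernel is exactly $\operatorname{span}\{\mathbbm 1_n\}$; integrating and using $\tr((L+\hat L)^{\dag})=\int_0^\infty\tr(e^{-(L+\hat L)t}M_n)\,dt$ gives the claimed lower bound on the cross term.

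Substituting this bound into the error identity yields $\|G-\hat G\|_{\mathcal H_2}^2\le\tfrac12\,\tr\!\big(L^{\dag}+\hat L^{\dag}-4(L+\hat L)^{\dag}\big)$, and dividing by $\|G\|_{\mathcal H_2}^2=\tfrac12\tr(L^{\dag})$ produces exactly \eqref{eq:908}. I expect the cross-term estimate to be the only genuine obstacle: because $L$ and $\hat L$ need not commute one cannot diagonalize them simultaneously, and it is precisely Golden--Thompson that supplies the missing operator-trace inequality; as a bonus, Golden--Thompson holds with equality when $L$ and $\hat L$ commute, which is why the bound in \eqref{eq:908} is tight. The only other point requiring care is consistently stripping the $\mathbbm 1_n$-direction from every trace so that the $-1$'s cancel cleanly.
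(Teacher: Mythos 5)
Your proof is correct and follows essentially the same route as the paper: both reduce the error norm to the two diagonal terms $\tfrac12\tr(L^{\dag})$, $\tfrac12\tr(\hat L^{\dag})$ plus the cross term $\int_0^\infty\tr\big(e^{-Lt}M_n e^{-\hat Lt}M_n\big)\,dt$, and both bound that cross term from below by $\tr\big((L+\hat L)^{\dag}\big)$ via the Golden--Thompson inequality. The only difference is packaging — you compute the $\mathcal H_2$ norms directly from the impulse responses via Parseval, whereas the paper builds an augmented $2n$-state system and reads the same three quantities off the blocks $X_1$, $X_3$, $X_2$ of the associated Lyapunov equation.
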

\begin{proof}
In the first step, we define an augmented dynamical network $\NN^*$ using the two given networks 
\begin{eqnarray}
\begin{cases}
\dot{z}(t) \,=\, -\begin{bmatrix}
       L+\frac{1}{n}J_n &  0 \\[0.2em]
        0  & L_{s}+\frac{1}{n}J_n  \end{bmatrix}z(t)+\begin{bmatrix} M_n \\ M_n \end{bmatrix} \xi (t)\\ 
        \\
y(t)\,=\,\begin{bmatrix} M_n & -M_n \end{bmatrix} z(t)
\label{first-order-D}
\end{cases}
\end{eqnarray}
in which $z \in \R^{2n}$, $\xi \in \R^n$, $y \in \R^n$.  If we denote the transfer matrix of network \eqref{first-order-D} from $\xi$ to $y$ by $G^*$, then one can show that 
	\begin{equation}
		\| G^* \|_{\mathcal H_2}~=~\| G_{}-G_{s}\|_{\mathcal H_2}.
		\label{eq:924}
	\end{equation}
It is known that calculating the $\mathcal H_2$-norm of a linear time-invariant system reduces to 
solving an Algebraic Lyapunov Equation (ALE) \cite{Doyle89}. Let us form the corresponding ALE 
to \eqref{eq:924} using the state matrices of  the augmented network
\begin{equation}
	\mathfrak A X ~+~X \mathfrak A~=~\begin{bmatrix} M_n & -M_n\\ -M_n & M_n \end{bmatrix},
	\label{lyp_eq}
	\end{equation}
	where 
	\[\mathfrak A~=~-\begin{bmatrix}
       L_{}+\frac{1}{n}J_n &  0_{n \times n} \\[0.3em]
         0_{n \times n}  & L_{s}+\frac{1}{n}J_n  \end{bmatrix}\]
       and 
       \[X~=~\begin{bmatrix} X_1 & X_2\\ X_2^{\text T} & X_3\end{bmatrix}. \]       
The matrix equation (\ref{lyp_eq}) can be decomposed into three Sylvester equations as follow.  The first equation is 
	\begin{equation*}
	\left(L_{}+\frac{1}{n}J_n\right)X_1~+~X_1\left(L_{}+\frac{1}{n}J_n\right)~=~M_n
	\end{equation*}
and its solution is given by 
	\begin{equation}
	X_1~=~\frac{1}{2} L_{}^{\dag}.
	\label{eq:1057}
	\end{equation} 
	The second equation is 
	\begin{equation*}
	\left (L_{s}+\frac{1}{n}J_n \right )X_3~+~X_3\left (L_{s}+\frac{1}{n}J_n\right)~=~M_n
	\end{equation*}
and its unique solution is given by
	\begin{equation}
	X_3~=~\frac{1}{2} L_{s}^{\dag}.
	\label{eq:1067}
	\end{equation}
	 Finally, the third one is 
			\begin{equation}
	\left(L_{}+\frac{1}{n}J_n\right)X_2~+~X_2\left(L_{s}+\frac{1}{n}J_n\right)~=~-M_n
	\end{equation}
with unique solution
	\[ X_2~=~-\int_{0}^{\infty}e^{-(L_{}+\frac{1}{n}J_n)t}M_ne^{-(L_{s}+\frac{1}{n}J_n)t}dt,\]
where the integrand can be reformulated as 
\begin{eqnarray*}
e^{-(L+\frac{1}{n}J_n)t}M_ne^{-(L_{s}+\frac{1}{n}J_n)t} \hspace{-0.2cm} & =  & \hspace{-0.2cm} e^{-(L+\frac{1}{n}J_n)t}e^{-(L_{s}+\frac{1}{n}J_n)t} \nonumber \\
&&\hspace{-0.6cm} +~~ \frac{1}{n}e^{-(L+\frac{1}{n}J_n)t}J_ne^{-(L_{s}+\frac{1}{n}J_n)t}  \nonumber \\
&=&e^{-(L+\frac{1}{n}J_n)t}e^{-(L_{s}+\frac{1}{n}J_n)t}+\frac{e^{-2t}}{n}J_n.
	\end{eqnarray*}
By utilizing the Golden-Thompson inequality for Hermitian matrices, it follows that 
\begin{eqnarray}
	&&\hspace{-0.7cm}\tr \left(e^{-(L_{}+\frac{1}{n}J_n)t}M_ne^{-(L_{s}+\frac{1}{n}J_n)t} \right)  \nonumber \\
	&&\hspace{2cm}=~ \tr \left(e^{-(L_{}+\frac{1}{n}J_n)t}e^{-(L_{s}+\frac{1}{n}J_n)t}+\frac{e^{-2t}}{n}J_n \right) \nonumber \\
	&&\hspace{2cm} \geq ~ \tr \left(e^{-(L_{}+L_s+\frac{2}{n}J_n)t}+\frac{e^{-2t}}{n}J_n\right)
	\end{eqnarray}
Therefore, the trace of $X_2$ can be bounded by
	\begin{eqnarray}
	\tr(X_2)&=&- \tr \left( \int_{0}^{\infty}e^{-(L_{}+\frac{1}{n}J_n)t}M_ne^{-(L_{s}+\frac{1}{n}J_n)t}dt \right) \nonumber \\
	&=&-  \int_{0}^{\infty} \tr \left(e^{-(L_{}+\frac{1}{n}J_n)t}M_ne^{-(L_{s}+\frac{1}{n}J_n)t}\right) dt  \nonumber \\
 	& \leq &  -  \int_{0}^{\infty} \tr \left(M_ne^{-(L_{s}+L+\frac{2}{n}J_n)t} \right) dt  \nonumber \\
	&=& - \tr \left((L_{}+L_{s})^{\dag}\right ).
	\label{eq:1068}
	\end{eqnarray}
Putting all these pieces together, the $\mathcal H_2$-norm of the augmented network can be written as
	\begin{eqnarray}
	\| G^*\|^2_{\HH_2} &=& \tr \left (\begin{bmatrix} M_n & M_n \end{bmatrix} \begin{bmatrix} X_1 & X_2\\X_2 & X_3 \end{bmatrix}\begin{bmatrix} M_n \\M_n \end{bmatrix}\right )\nonumber \\
	&=&\tr \left (\begin{bmatrix} X_1 & X_2\\X_2 & X_3 \end{bmatrix}\begin{bmatrix} M_n & M_n\\M_n & M_n \end{bmatrix}\right ).
	\label{eq:1084}
	\end{eqnarray}
From \eqref{eq:1057}, \eqref{eq:1067} and \eqref{eq:1084}, it follows that
	\begin{eqnarray}
	\| G^*\|^2_{\HH_2} &=& \frac{1}{2}\tr (L_{}^{\dag}+L_{s}^{\dag})~-~\tr((X_2+X_2^{\text T})M_n)	\nonumber \\
	&\leq&\frac{1}{2}\tr (L_{}^{\dag}+L_{s}^{\dag})~-~2\tr \left ((L_{}+L_{s})^{\dag} \right), \nonumber 
	\end{eqnarray}
where \eqref{eq:1068} is used  in the last inequality. Finally, from this and \eqref{eq:924}, we conclude the desired result 
\[	\| G -G_s\|^2_{\HH_2} \, \leq \, \frac{1}{2}\tr \left (\hat{L}^{\dag}+L^{\dag}-4(L+\hat{L})^{\dag}\right ). \]
\end{proof}

The right-hand side of inequality \eqref{eq:908} is always non-negative, \ie  
\[ 0 ~\leq~ {\tr \left (\hat{L}^{\dag}+L^{\dag}-4(L+\hat{L})^{\dag}\right )}. \]
This is because of the fact that $\tr(L^{\dag})$ is convex on $\LL$ and the following inequality holds 
\[ \tr \left( (L +\hat L)^\dag \right) ~\leq~ \frac{1}{4} \tr \big(L^\dag \big)+ \frac{1}{4} \tr \big(\hat L ^\dag \big).\]

The inequality \eqref{proximity-cone} implies proximity of state matrices of the original and its abstraction on the cone of positive semidefinite matrices. In the following result, it is proven that the frequency specifications of two $\epsilon$-approximations are indeed very similar in $\mathcal H_2$ sense. 

\begin{figure}[t]
  \begin{center}  
  \psfrag{e}[t][t]{~~~{\color{blue}$\NN(L)$:}}   
  \psfrag{f}[t][t]{~~~{\color{blue}$\NN(L_s)$:}}   
  	 \psfrag{a}[t][t]{~~$\xi(t)$}      
	  \psfrag{b}[t][t]{~~~~~~~~$y(t)-y_s(t)$}   
	   \psfrag{c}[t][t]{$y(t)$}   
	    \psfrag{d}[t][t]{~~~$y_s(t)$}   
  \includegraphics[trim = 200 150 100 100, clip, width=0.5\textwidth]{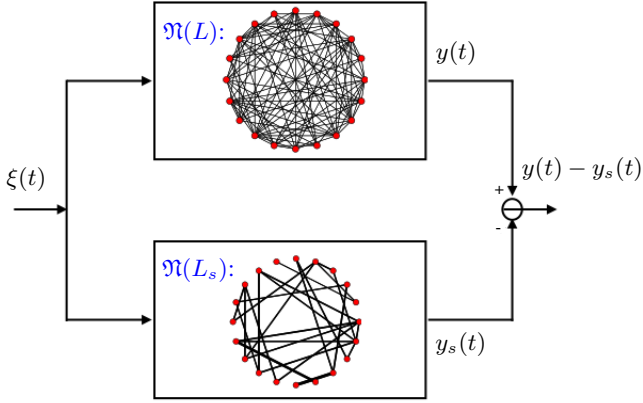}
   \end{center}
   \caption{\small{The block diagram of the augmented network \eqref{first-order-D}.}}
     \label{h_2_error_block}
\end{figure}

\begin{theorem}\label{prop-spars-norm}
If  $\NN(L_{s})$ is an $\epsilon$-approximation  of $\NN(L)$ for some $0 \leq \epsilon <1$, then
\begin{equation} \label{upper-bound}
\frac{\| G-G_s\|_{\mathcal H_2}}{\| G\|_{\mathcal H_2}} ~\leq~ {\sqrt{\frac{\epsilon(4-\epsilon)}{(1-\epsilon)(2+\epsilon)} }},
\end{equation}
where $G(s)$ and $G_s(s)$ represent the transfer matrices from input $\xi$ to output $y$ of $\NN(L)$ and $\NN(L_s)$, respectively. 
	\end{theorem}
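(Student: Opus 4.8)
The plan is to apply Lemma~\ref{lemma-spars-norm} with $\hat L = L_s$ and then bound the two pseudoinverse traces on its right-hand side using the Loewner sandwich guaranteed by Theorem~\ref{th-abstract}. Since $\NN(L_s)$ is an $\epsilon$-approximation of $\NN(L)$, Theorem~\ref{th-abstract} gives $(1-\epsilon)L \preceq L_s \preceq (1+\epsilon)L$. The matrices $L$, $L_s$, and $L+L_s$ are all Laplacians of connected graphs on the same vertex set, so they share the kernel $\Span\{\mathbbm 1_n\}$ (equivalently, their range is $\mathbbm 1_n^{\perp}$); this common-kernel fact is what permits reversing the Loewner order under the Moore--Penrose pseudoinverse.

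First I would use $(1-\epsilon)L \preceq L_s$ to conclude $L_s^{\dag} \preceq (1-\epsilon)^{-1} L^{\dag}$, hence $\tr(L_s^{\dag}) \le (1-\epsilon)^{-1}\tr(L^{\dag})$. Next, adding $L$ to both sides of $L_s \preceq (1+\epsilon)L$ yields $L+L_s \preceq (2+\epsilon)L$, so $(L+L_s)^{\dag} \succeq (2+\epsilon)^{-1}L^{\dag}$ and therefore $\tr\big((L+L_s)^{\dag}\big) \ge (2+\epsilon)^{-1}\tr(L^{\dag})$. Substituting these two estimates into \eqref{eq:908} (with $\hat L = L_s$) and dividing through by $\tr(L^{\dag})>0$ (positive since the coupling graph is connected) gives
\[
\frac{\|G-G_s\|_{\mathcal H_2}^2}{\|G\|_{\mathcal H_2}^2} \;\le\; \frac{1}{1-\epsilon} + 1 - \frac{4}{2+\epsilon}.
\]
Putting the right-hand side over the common denominator $(1-\epsilon)(2+\epsilon)$ collapses the numerator to $\epsilon(4-\epsilon)$, and taking square roots delivers \eqref{upper-bound}; the bound is well posed because $\epsilon(4-\epsilon)\ge 0$ and $(1-\epsilon)(2+\epsilon)>0$ for $\epsilon\in[0,1)$.

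I expect the only genuinely delicate point to be the order-reversal step for the pseudoinverse: $A \preceq B$ does \emph{not} in general imply $B^{\dag} \preceq A^{\dag}$, and the argument must explicitly invoke that $L$, $L_s$, and $L+L_s$ all have range $\mathbbm 1_n^{\perp}$ so that the inversion is legitimate on that common subspace. Everything else --- monotonicity of the trace under the Loewner order, the algebraic simplification of the coefficient, and checking signs to justify the final square root --- is routine bookkeeping.
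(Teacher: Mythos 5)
Your proof is correct and follows essentially the same route as the paper: apply Lemma~\ref{lemma-spars-norm} with $\hat L=L_s$, bound $\tr(L_s^{\dag})\le(1-\epsilon)^{-1}\tr(L^{\dag})$ and $\tr\big((L+L_s)^{\dag}\big)\ge(2+\epsilon)^{-1}\tr(L^{\dag})$ from the Loewner sandwich of Theorem~\ref{th-abstract}, and simplify to $\epsilon(4-\epsilon)/\big((1-\epsilon)(2+\epsilon)\big)$. The only cosmetic difference is that the paper obtains the first trace bound by invoking that $\tr(L^{\dag})$ is a homogeneous systemic measure of order $-1$ rather than by reversing the Loewner order directly, and your explicit remark about the common kernel $\Span\{\mathbbm 1_n\}$ justifying the pseudoinverse order reversal is a point the paper leaves implicit.
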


\begin{proof}
According to the definition of $\epsilon$-approximation and the fact that $\tr(L^{\dag})$ is a homogeneous systemic performance measure of order $ -1$, we get
 \begin{equation}
\tr ( L_s^{\dag}) ~\leq~ \frac{1}{1-\epsilon} \tr (L^{\dag}).
\label{eq:757}
 \end{equation}
Moreover, it follows from Theorem \ref{th-abstract} that
 \begin{equation}
(2+\epsilon)^{-1} (L )^{\dag} ~ \leq ~ (L+ L_s)^{\dag}.
\label{eq:1113}
 \end{equation}
By taking trace from both sides of \eqref{eq:1113}, one obtains 
 \begin{equation}
 \tr \left( (L + L_s)^{\dag} \right) ~ \geq ~ \frac{1}{2+\epsilon}\tr \big(L^{\dag} \big).
 \label{eq:762}
 \end{equation}
By applying result of Lemma \ref{lemma-spars-norm}, we can conclude that 
\begin{eqnarray}
 \frac{\| G-{G_s}\|^2_{\mathcal H_2}}{\|G\|_{\mathcal H_2}^2}&\leq&\frac{\tr \left (\hat{L}^{\dag}+L^{\dag}-4(L+\hat{L})^{\dag}\right )}{\tr (L^\dag)}  \nonumber \\
 &=& \frac{\tr \left ({L_s}^{\dag}\right)+\tr \left (L^{\dag}\right)-4 \tr \left ( (L+{L_s})^{\dag} \right)}{\tr (L^\dag)} \nonumber \\
 & \leq & \frac{\epsilon(4-\epsilon)}{(1-\epsilon)(2+\epsilon)},
 \label{eq:769}
 \end{eqnarray}
where the last inequality is obtained after using  \eqref{eq:757} and \eqref{eq:762}. 
\end{proof}

  Fig. \ref{fig:approx} depicts the upper bound in inequality \eqref{upper-bound} for the relative $\mathcal H_2$-norm error of a linear consensus network and its $\epsilon$-approximation.


\begin{corollary}
Suppose that $y$ is the output of network $\NN(L)$ and $y_s$ is the output of its corresponding $\epsilon$-approximation $\NN(L_{s})$ for some $0 \leq \epsilon <1$. Then,  the steady-state expected value of their output error can be bounded by 
\begin{equation*}
 \lim_{t \rightarrow \infty}\E \Big\{\| y(t) - y_s(t)\|_2^2 \Big\} ~\leq~ \frac{\epsilon(4-\epsilon)}{2(1-\epsilon)(2+\epsilon)} \tr \big(L^{\dag} \big). 
 \end{equation*}
\end{corollary}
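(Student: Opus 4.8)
\emph{Proof proposal.} The plan is to recognize the left-hand side as the squared $\mathcal{H}_2$-norm of the error system and then invoke Theorem \ref{prop-spars-norm} together with the known expression for $\|G\|_{\mathcal{H}_2}$. First I would note that $y(t)-y_s(t)$ is exactly the output of the augmented network $\NN^*$ in \eqref{first-order-D}, driven by $\xi$, whose transfer matrix $G^*$ satisfies $\|G^*\|_{\mathcal{H}_2}=\|G-G_s\|_{\mathcal{H}_2}$ as established in \eqref{eq:924}. Since $\xi$ is a white noise process with identity covariance and the augmented system is stable in the disagreement coordinates (the consensus direction $\mathbbm{1}_n$ is unobservable from the output because $M_n\mathbbm{1}_n=0$, and the remaining modes have strictly negative real part), the standard $\mathcal{H}_2$/Lyapunov identity for linear time-invariant systems gives
\[
\lim_{t\to\infty}\E\Big\{\|y(t)-y_s(t)\|_2^2\Big\} ~=~ \|G^*\|_{\mathcal{H}_2}^2 ~=~ \|G-G_s\|_{\mathcal{H}_2}^2.
\]

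Next I would apply Theorem \ref{prop-spars-norm}, which yields
\[
\|G-G_s\|_{\mathcal{H}_2}^2 ~\leq~ \frac{\epsilon(4-\epsilon)}{(1-\epsilon)(2+\epsilon)}\,\|G\|_{\mathcal{H}_2}^2 ,
\]
and then substitute the explicit value $\|G\|_{\mathcal{H}_2}^2=\tfrac12\sum_{i=2}^n\lambda_i^{-1}=\tfrac12\tr\big(L^{\dag}\big)$ (the $p=2$ case of \eqref{sys-meas-H-p}, using that the nonzero eigenvalues of $L^{\dag}$ are $\lambda_2^{-1},\dots,\lambda_n^{-1}$). Combining the three displays gives
\[
\lim_{t\to\infty}\E\Big\{\|y(t)-y_s(t)\|_2^2\Big\} ~\leq~ \frac{\epsilon(4-\epsilon)}{(1-\epsilon)(2+\epsilon)}\cdot\frac12\tr\big(L^{\dag}\big) ~=~ \frac{\epsilon(4-\epsilon)}{2(1-\epsilon)(2+\epsilon)}\tr\big(L^{\dag}\big),
\]
which is the claimed bound.

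The only genuinely delicate point, and the step I expect to be the main obstacle, is the first identity: justifying that the steady-state expected squared output error equals $\|G-G_s\|_{\mathcal{H}_2}^2$ despite the marginally stable consensus mode present in both $-L-\tfrac1n J_n$-type dynamics. This is handled exactly as in the proof of Lemma \ref{lemma-spars-norm}: one passes to the disagreement realization, where the relevant state matrix $-(L+\tfrac1n J_n)$ (and its $L_s$ counterpart) is Hurwitz, so the output covariance converges and is governed by the algebraic Lyapunov equation \eqref{lyp_eq}; its trace against the output weighting is precisely $\|G^*\|_{\mathcal{H}_2}^2$. Everything else is a direct chaining of previously proved inequalities, so no further calculation is required.
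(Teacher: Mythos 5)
Your proposal is correct and follows exactly the route the paper intends: the paper's one-line justification ("based on the proof of Theorem \ref{prop-spars-norm}") is precisely your chain of identifying the steady-state output-error variance with $\|G-G_s\|_{\mathcal H_2}^2$ via the augmented/disagreement realization of Lemma \ref{lemma-spars-norm}, applying the bound \eqref{upper-bound}, and substituting $\|G\|_{\mathcal H_2}^2=\tfrac12\tr(L^{\dag})$. Your handling of the marginally stable consensus mode (unobservability through $M_n$) is the same device the paper uses throughout, so no gap remains.
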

The proof of this corollary is based on the proof of Theorem  \ref{prop-spars-norm}.  

\begin{remark}
In \cite{Dhingra12}, the authors consider a similar problem of identifying a sparse representation of a given dense linear consensus network. Their proposed method has two steps. First, an optimal sparse network topology is obtained by adding the $\mathcal H_2$-norm error (c.f., Fig. \ref{h_2_error_block}) with another penalizing term that accounts for sparsity. Then, the optimal link weights are chosen over the identified topology. For both cases, approximation methods based on the Broyden-Fletcher-Goldfarb-Shanno method are employed in \cite{Dhingra12}. This method chooses a descent direction based on an approximation of the Hessian matrix. Therefore each update requires $\mathcal O(n^6)$ operations\cite{Dhingra12}. In comparison with Algorithm 1 in Section \ref{spars}, the proposed method in \cite{Dhingra12} is computationally expensive. Furthermore, it does not provide any guaranteed performance certificates. 
\end{remark}

\begin{figure}[t]
	\centering
	 \psfrag{x}[t][t]{ \footnotesize{$\epsilon$}}        
       \psfrag{y}[b][b]{\footnotesize{${\sqrt{{\epsilon(4-\epsilon)}/{(1-\epsilon)(2+\epsilon)} }}$}} 
	\includegraphics[trim = 5 5 5 5, clip,width=.48 \textwidth]{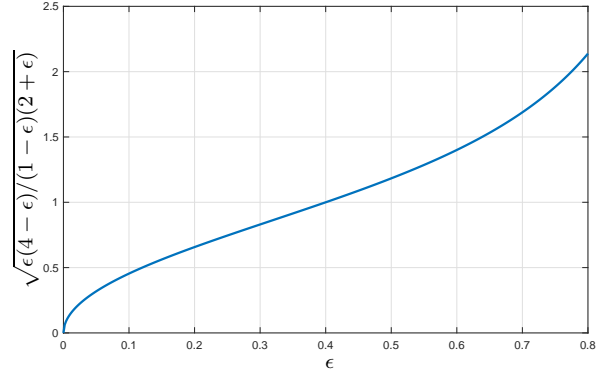} 
	\caption{\small This plot presents the upper bound given by Theorem \ref{prop-spars-norm} on the $\mathcal H_2$-norm error of a consensus network and its $\epsilon$-approximation network.}
	\label{fig:approx}
\end{figure}

\section{Localized Network Abstraction} \label{sec:partial}

Our methodology can be extended further to explore several interesting network design problems, such as partial or localized abstraction of a given large-scale consensus network. In this section, we only look at one of such design problems. Let us consider a slightly modified version of \eqref{first-order-G} by involving a predesigned  state feedback controller 	\begin{eqnarray}
		\NN(L_0+L_1):\begin{cases}
			\dot x(t)\,=\, -L_{0}x(t)\,+\,u(t)\,+\,\xi(t)\\
			u(t)\,=\,-L_{1}x(t)\\
			y(t)\,=M_n x(t)
		\end{cases}
	\label{modified-consensus-algorithm}
	\end{eqnarray}
with initial condition $x(0) =x_0$, where $L_0$ is the Laplacian matrix of the open-loop network and the Laplacian matrix $L_1$ is the predesigned state feedback gain matrix. Let us represent the corresponding coupling graph to $L_1$ by $\G_1$. If $L_1$ is obtained via traditional optimal control methods without incorporating sparsity measures, then one should expect to get a dense  interconnection topology for $\G_1$; we refer to  \cite{motee-sun} for discussions on spatially decaying structure of optimal controllers. Therefore, our design objective is to compute a localized abstraction for the closed-loop network $\NN(L_0+L_1)$ that only sparsifies $\G_1$. Let us represent such an abstraction  by $\hat{\G}_1$ with Laplacian $\hat{L}_1$. 

\begin{theorem}\label{partial}
Suppose that a linear consensus network with structure (\ref{modified-consensus-algorithm}), a homogeneous systemic performance measure $\rhoo:\LL \rightarrow \R_+$ of order $-\alpha$, and a design parameter $d > 2$ are given. For $\epsilon= \frac{\sqrt{8d}}{d+2}$, there exists a subgraph abstraction $\hat{\G}_1=(\V,\hat{\EE},\hat{w})$ of $\G_1=(\V,\EE,w)$ with at most $dn/2$ links that satisfies $\hat{\EE} \subset \EE$ and  
		\begin{equation}
			\left| \frac{\Pi_{\rho}(L_{0}+{L}_1)-\Pi_{\rho}(L_{0}+\hat L_{1})}{\Pi_{\rho}(L_{0}+\hat L_{1})}	\right|\leq \epsilon. 
		\end{equation}
Furthermore, it follows that
		\[ w_{\text{total}}(\hat{L}_1) ~\leq ~(1+\epsilon) \hspace{0.05cm} w_{\text{total}}(L_1) \]
{ in which $w_{\text{total}}(L) = \frac{1}{2} \tr(L)$.}		
	\end{theorem}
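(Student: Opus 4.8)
The plan is to reduce the partial-abstraction problem to the global abstraction result of Theorem \ref{th-main} by treating $L_0$ as a fixed additive offset. First I would apply Theorem \ref{th-main} to the network $\NN(L_1)$ alone: since $d > 2$, there exists a subgraph $\hat{\G}_1 = (\V, \hat{\EE}, \hat{w})$ of $\G_1$ with $|\hat{\EE}| \leq dn/2$, $\hat{\EE} \subset \EE$, such that its Laplacian $\hat{L}_1$ satisfies the sandwich
\[
\left(1 - \tfrac{\sqrt{8d}}{d+2}\right) L_1 ~\preceq~ \hat{L}_1 ~\preceq~ \left(1 + \tfrac{\sqrt{8d}}{d+2}\right) L_1,
\]
i.e. $(1-\epsilon)L_1 \preceq \hat{L}_1 \preceq (1+\epsilon)L_1$ with $\epsilon = \frac{\sqrt{8d}}{d+2}$. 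This is exactly the scaling built into \eqref{eq:501}.

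The key observation is that adding the \emph{same} positive semidefinite matrix $L_0$ to all three terms of a Loewner inequality preserves it. Hence
\[
(1-\epsilon)(L_0 + L_1) ~\preceq~ L_0 + \hat{L}_1 ~\preceq~ (1+\epsilon)(L_0 + L_1),
\]
where I have also used $(1-\epsilon)L_0 \preceq L_0 \preceq (1+\epsilon)L_0$, which holds because $L_0 \succeq 0$ and $\epsilon \in (0,1)$; summing this with the previous display gives the claim. Now both $L_0 + L_1$ and $L_0 + \hat{L}_1$ are Laplacian matrices of connected graphs (the coupling graph of $L_0+L_1$ is connected by Assumption \ref{assum-on-graphs}, and $L_0 + \hat{L}_1 \succeq (1-\epsilon)(L_0+L_1) \succ 0$ on $\mathbbm{1}_n^{\perp}$), so Theorem \ref{th-abstract} applies verbatim with $L \leftarrow L_0 + L_1$ and $L_s \leftarrow L_0 + \hat{L}_1$, yielding
\[
\left| \frac{\Pi_{\rho}(L_0 + L_1) - \Pi_{\rho}(L_0 + \hat{L}_1)}{\Pi_{\rho}(L_0 + \hat{L}_1)} \right| \leq \epsilon
\]
for every homogeneous systemic performance measure $\rho$.

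For the total-weight bound, recall $w_{\text{total}}(L) = \frac{1}{2}\tr(L)$, which is a linear (hence monotone under $\preceq$ in the sense of trace) functional on Laplacians. From $\hat{L}_1 \preceq (1+\epsilon)L_1$ we take traces of both sides to get $\tr(\hat{L}_1) \leq (1+\epsilon)\tr(L_1)$, hence $w_{\text{total}}(\hat{L}_1) \leq (1+\epsilon)\,w_{\text{total}}(L_1)$. Finally, $\hat{\EE} \subset \EE$ and $|\hat{\EE}| \leq dn/2$ are inherited directly from the application of Theorem \ref{th-main} to $\G_1$.

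I do not expect a serious obstacle here; the only point requiring a little care is confirming that $L_0 + \hat{L}_1$ is genuinely the Laplacian of a connected graph (so that $\Pi_\rho$ is well-defined on it) and that the additive-offset manipulation of the Loewner order is legitimate, both of which are immediate. The slightly subtle conceptual step is recognizing that sparsifying only $L_1$ and leaving $L_0$ intact still controls the \emph{closed-loop} performance, which is precisely what the monotonicity-preserving addition of $L_0$ delivers.
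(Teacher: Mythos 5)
Your proposal is correct and follows essentially the same route as the paper: obtain the Loewner sandwich $(1-\epsilon)L_1 \preceq \hat{L}_1 \preceq (1+\epsilon)L_1$ from the spectral-sparsification construction (the paper re-derives it from \cite[Th.~1.1]{Spielman14} with the scaling \eqref{eq:676}, while you invoke Theorem \ref{th-main} directly), add the trivial sandwich for $L_0$, and conclude via Theorem \ref{th-abstract}. Your trace argument for the total-weight bound is a welcome addition, since the paper's own proof asserts that bound without explicitly deriving it.
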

\begin{proof}
According to \cite[Th. 1.1]{Spielman14}, coupling graph $\G_1=(\V, \EE,w)$ has a weighted subgraph $\mathcal H=(\V, \hat \EE, \hat w)$ with $|\hat \EE|=\lceil {d(n-1)/2} \rceil$ that satisfies
\begin{equation}
L_1 \, \preceq \, L_{\mathcal H} \, \preceq \, \frac{d+2+\sqrt{8d}}{d+2-\sqrt{8d}} ~L_1
\label{eq:671}
\end{equation}
where $L_{\mathcal H}$ is the Laplacian matrix of graph $\mathcal H$. We define $\hat {\mathcal G}_1=(\V, \hat \EE, \hat w)$ based on the following  Laplacian matrix  
\begin{equation}
\hat L_1 ~:=~  \frac{(1-\sqrt{d/2})^2}{1+d/2} ~L_{\mathcal H}.
\label{eq:676}
\end{equation}
From \eqref{eq:671} and \eqref{eq:676}, it follows that 
\begin{equation}
 \left(1-\frac{\sqrt{2d}}{d/2 +1}\right) L_1 \, \preceq \, \hat L_{1}  \, \preceq \,  \left(1+\frac{\sqrt{2d}}{d/2 +1}\right)L_1.
 \label{eq:681}
\end{equation}
Moreover, we know that
\begin{equation}
 \left(1-\frac{\sqrt{2d}}{d/2 +1}\right) L_0 \, \preceq \,  L_{0}  \, \preceq \,  \left(1+\frac{\sqrt{2d}}{d/2 +1}\right) L_0.
 \label{eq:686}
\end{equation}
From \eqref{eq:681} and \eqref{eq:686}, we have
\begin{eqnarray}
\left(1-\frac{\sqrt{8d}}{d +2}\right) (L_0 + L_1) & \preceq &  (L_{0} + \hat L_1) \nonumber \\
 &\preceq & \left(1+\frac{\sqrt{8d}}{d +2}\right) (L_0+L_1).
 \label{eq:691}
\end{eqnarray}
Using \eqref{eq:691} and  the result of Theorem \ref{th-abstract}, it yields that $\NN(L_0+\hat L_{1})$ is a $(\frac{\sqrt{8d}}{d+2}, d)$-abstraction of $\NN(L_0+L_1)$. 
\end{proof}
	
\vspace{0.1cm}
This result is particularly useful is abstraction of large-scale consensus networks where the control objective is to abstract only desired parts of a network without drastically affecting the global performance.

\section{Parallel Network Abstraction}
\label{sec:parallel}

Building upon the results of the previous section, we introduce a distributed and parallel implementation of our proposed localized algorithm. The main advantage of parallel abstraction is that several localized abstraction problems can be solved simultaneously by dividing a large-scale network abstraction problem  into smaller in size localized problems.

\vspace{.0cm}
\begin{definition}
A base subgraph $\mathcal G_0$ of a network is a subgraph which is formed by those feedback links that will stay unchanged throughout the abstraction process. \end{definition}
\vspace{.0cm}

Let us denote Laplacian matrix of a base subgraph $\mathcal G_0$  by $L_0$.  For a given natural number $p$, the original coupling graph can be partitioned as the union of a base subgraph and $p$ other subgraphs $\G_1, \ldots, \G_p$, \ie 
\begin{equation}
L~=~ L_0 + \sum_{i=1}^p L_i,
\label{eq:852}
\end{equation}
where $L_i$ is Laplacian matrix of subgraph $\G_i$. Without loss of generality, one may assume that the node set of all subgraphs is $\V$, where  $\V$ is the node set of the original graph. This assumption implies that the corresponding Laplacian matrices are compatible $n$-by-$n$ matrices. The network setup for parallel implementation of abstraction takes the following form 	
\begin{eqnarray}
\begin{cases}
			\dot x(t)\,=\, -L_{0}x(t)\,+\,u(t)\,+\,\xi(t)\\
			\displaystyle u(t)\,=\,-\sum_{i=1}^p L_i \hspace{0.05cm} x(t)\\
			y(t)\,=M_n x(t)
		\end{cases}
	\label{parallel-algorithm}
	\end{eqnarray}
with initial condition $x(0) =x_0$.

\vspace{.0cm}
\begin{assumption}
The corresponding  subgraphs to $L_i$ for $i=1,\ldots,p$ are link-disjoint and dense.
\end{assumption}

\vspace{.0cm}
\begin{theorem}
	\label{parallel}
Suppose that a linear consensus network with structure (\ref{parallel-algorithm}), a homogeneous systemic performance measure $\rhoo:\LL \rightarrow \R_+$ of order $-\alpha$, and a number $d > 2$ are given. If the coupling graph of the network can be decomposed as \eqref{eq:852}, then for $\epsilon= \frac{\sqrt{8d}}{d+2}$  there exists a set of subgraph sparsifier $\{\hat{\G}_i \}_{i=1}^p$ for $\{\G_i \}_{i=1}^p$ where each sparsifier subgraph has at most $dn/2$ links and the global performance index satisfies 
		\begin{equation}
			\left| \frac{\Pi_{\rho}(L_{0}+\sum_{i=1}^p{L}_i)-\Pi_{\rho}(L_{0}+\sum_{i=1}^p\hat L_{i})}{\Pi_{\rho}(L_{0}+\sum_{i=1}^p\hat L_{i})}	\right|~\leq~ \epsilon. 
		\end{equation}
Furthermore, it follows that
		\[ w_{\text{total}}(\hat{L}_i) ~\leq ~(1+\epsilon) \hspace{0.05cm} w_{\text{total}}(L_i), \]
for $i=1,2,\cdots,p$.		
	\end{theorem}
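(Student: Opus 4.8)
The plan is to reduce Theorem~\ref{parallel} to the single-subgraph result of Theorem~\ref{partial}, exploiting the fact that the Loewner order is additive. First I would apply the deterministic spectral sparsification result \cite[Th.~1.1]{Spielman14} separately to each of the $p$ dense, link-disjoint subgraphs $\G_i$. For every $i\in\{1,\dots,p\}$ this yields a weighted subgraph $\mathcal H_i=(\V,\hat\EE_i,\hat w_i)$ with $\hat\EE_i\subset\EE_i$, $|\hat\EE_i|=\lceil d(n-1)/2\rceil\le dn/2$ (using $d>2$), and
\[ L_i ~\preceq~ L_{\mathcal H_i} ~\preceq~ \frac{d+2+\sqrt{8d}}{d+2-\sqrt{8d}}\,L_i. \]
Since the $\G_i$ are link-disjoint, these $p$ sparsification tasks are independent and may be dispatched to separate processors, which is precisely what makes the scheme ``parallel''.

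Next I would rescale each sparsifier exactly as in the proof of Theorem~\ref{partial}, setting $\hat L_i:=\frac{(1-\sqrt{d/2})^2}{1+d/2}\,L_{\mathcal H_i}$, so that
\[ \Big(1-\tfrac{\sqrt{8d}}{d+2}\Big)L_i ~\preceq~ \hat L_i ~\preceq~ \Big(1+\tfrac{\sqrt{8d}}{d+2}\Big)L_i \]
for $i=1,\dots,p$, where I use the identity $\frac{\sqrt{2d}}{d/2+1}=\frac{\sqrt{8d}}{d+2}$. Appending the trivial two-sided bound $\big(1-\tfrac{\sqrt{8d}}{d+2}\big)L_0\preceq L_0\preceq\big(1+\tfrac{\sqrt{8d}}{d+2}\big)L_0$ and summing the resulting $p+1$ inequalities --- legitimate because $A_j\preceq B_j$ for all $j$ forces $\sum_j A_j\preceq\sum_j B_j$ --- I obtain, with $\epsilon=\frac{\sqrt{8d}}{d+2}$,
\[ (1-\epsilon)\Big(L_0+\sum_{i=1}^p L_i\Big) ~\preceq~ L_0+\sum_{i=1}^p\hat L_i ~\preceq~ (1+\epsilon)\Big(L_0+\sum_{i=1}^p L_i\Big). \]
The left-hand inequality gives $\ker\big(L_0+\sum_i\hat L_i\big)\subseteq\ker(L)=\Span\{\mathbbm 1_n\}$, so the abstracted Laplacian belongs to $\LL$ and $\rho$ is well defined on it; Theorem~\ref{th-abstract} then converts this sandwich bound into the asserted estimate on $\Pi_\rho$.

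For the total-weight claim I would simply note that $w_{\text{total}}(L)=\tfrac12\tr(L)$ is linear and monotone on the PSD cone, so the right-hand inequality $\hat L_i\preceq(1+\epsilon)L_i$ gives $\tr(\hat L_i)\le(1+\epsilon)\tr(L_i)$, i.e.\ $w_{\text{total}}(\hat L_i)\le(1+\epsilon)\,w_{\text{total}}(L_i)$ for each $i$. I do not expect a genuine obstacle: the statement is essentially a bookkeeping extension of Theorem~\ref{partial}. The only points that merit a moment's care are that \cite[Th.~1.1]{Spielman14} still yields the two-sided Loewner bound on each $\G_i$ even when an individual $\G_i$ is disconnected --- it does, since spectral sparsification is a statement about quadratic forms rather than about connectivity --- and that link-disjointness of the $\G_i$ is exactly the hypothesis that makes the per-subgraph sparsifications, and hence the whole abstraction, computable in parallel.
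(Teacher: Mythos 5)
Your proposal is correct and follows essentially the same route as the paper: sparsify each link-disjoint $\G_i$ via \cite[Th.~1.1]{Spielman14}, rescale by $\frac{(1-\sqrt{d/2})^2}{1+d/2}$, sum the resulting Loewner sandwiches together with the trivial bound on $L_0$, and invoke Theorem~\ref{th-abstract}. Your explicit trace argument for $w_{\text{total}}(\hat L_i)\le(1+\epsilon)\,w_{\text{total}}(L_i)$ is a welcome addition, since the paper's proof leaves that claim implicit.
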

\begin{proof}
For every $i=1, \cdots, p$, coupling subgraph $\G_i=(\V, \EE_i,w_i)$ has a weighted subgraph $\mathcal H_i=(\V, \hat \EE_i, \hat w_i)$ with $|\hat \EE_i|=\lceil {d(n-1)}/2 \rceil$ whose Laplacian satisfies
\begin{equation}
L_i ~\preceq~ L_{\mathcal H_i} ~\preceq~ \frac{d+1+2\sqrt{d}}{d+1-2\sqrt{d}} L_i
\label{eq:671b}
\end{equation}
where $L_{\mathcal H_i}$ is the Laplacian matrix of graph $\mathcal H_i$ \cite[Th. 1.1]{Spielman14}. We define every $\hat {\mathcal G}_i=(\V, \hat \EE_i, \hat w_i)$ according to the following  Laplacian matrix
\begin{equation} 
\hat L_i ~:=~   \frac{(1-\sqrt{d/2})^2}{1+d/2}  L_{\mathcal H_i}.
\label{eq:676b}
\end{equation}
By putting \eqref{eq:671b} and \eqref{eq:676b} together, it follows that 
\begin{equation}
 \left(1-\frac{\sqrt{8d}}{d +2}\right)  L_i ~\preceq~ \hat L_{i} ~\preceq~   \left(1+\frac{\sqrt{8d}}{d +2}\right)  L_i.
 \label{eq:681b}
\end{equation}
Moreover, we know that
\begin{equation}
 \left(1-\frac{\sqrt{8d}}{d +2}\right)  L_0 ~\preceq~  L_{0} ~\preceq~   \left(1+\frac{\sqrt{8d}}{d +2}\right)  L_0.
 \label{eq:686b}
\end{equation}
From \eqref{eq:681b} and \eqref{eq:686b}, we get
\begin{eqnarray}
&& \hspace{-1.4cm} \left(1-\frac{\sqrt{2d}}{d/2 +1}\right)  (L_0 + \sum_{i=1}^p L_i) \, \preceq L_{0} + \sum_{i=1}^p \hat L_i  \nonumber \\
&&~~~~~~~~~~~~~~~~~~~~ \preceq \,  \left(1+\frac{\sqrt{2d}}{d/2 +1}\right)  (L_0+\sum_{i=1}^p L_i).
 \label{eq:691b}
\end{eqnarray}
From \eqref{eq:691b} and  the result of Theorem \ref{th-abstract}, one can conclude that  $\NN(L_0+\sum_{i=1}^p \hat L_{i})$ is a $(\frac{\sqrt{8d}}{d+2}, d)$-abstraction of $\NN(L_0+\sum_{i=1}^pL_i)$. 
\end{proof}

\vspace{0.1cm}

Algorithm 1 can be implemented on several parallel localized processing units to abstract every $\G_i$ for $i=1,\cdots,p$. This parallelization scheme cuts the time complexity of solving an abstraction problem down to  ~$\tilde{\mathcal O} (m_{\max})$, where $m_{\max}$ is the number of links of the densest graph among $\G_1, \ldots, \G_p$.

\section{Shortcomings of $\ell_0$/$\ell_1$-Regularized Sparsification Methods}\label{ell0-ell1}
In order to put our proposed methodology into perspective, we discuss some of the shortcomings of the $\ell_0$/$\ell_1$-regularization based sparsification methods. The common approach is to formulate an optimal control problem that is augmented by a penalty term to promote sparsity. The resulting optimal control problem can be usually cast as a bilinear matrix optimization problem and convexified using alternating methods \cite{farlinjovTAC14sync, arstoo-cdc2016}. While $\ell_0$/$\ell_1$-regularization based methods generally do not scale with network size and suffer from high time complexities, we would like to accentuate a more important issue by means of an example that shows $\ell_0$/$\ell_1$-regularized methods do not always return sparse solutions. Let us consider a linear consensus network that is governed by \eqref{first-order-G} whose feedback structure is represented by a complete graph with identical link weights $w_0$, Laplacian matrix $L_0$, link set $\mathcal{E}_0$, and incidence matrix $E_0$. The control objective is to eliminate or reweigh some links of the network while minimizing $\mathcal{H}_2$-norm of the network from $\xi$ to $y$. This problem can be formulated as the following $\ell_0$-problem   
\begin{eqnarray}
& & \hspace{-5.1cm}\underset{w(e)}{\textrm{Minimize}}  ~~\|G_F\|^2_{\mathcal{H}_2}~+~{\frac{\gamma}{2}} \hspace{0.05cm} \|A_F\|_{\ell_0} \label{counter-1}\\
& & \hspace{-5.1cm} \mbox{subject to:} \nonumber\\
& & \hspace{-5.1cm} \displaystyle F~=~\sum_{e \in \mathcal{E}_0}w(e) b_e  b_e^{\rm T} \label{counter-2}\\
& & \hspace{-5.1cm} w(e) \leq w_0 \label{counter-2.5} \\
& & \hspace{-5.1cm} L_0-F \in \LL \label{counter-3}
\end{eqnarray}
in which $G_F(s)$ is the transfer matrix of network $\NN(L_0-F)$ from $\xi$ to $y$, $A_F$ is the adjacency matrix of  $\NN(L_0-F)$,  and $b_e$ is the column of incidence matrix $E_0$ that corresponds to link $e$. Boundedness of the cost function as well as constraint \eqref{counter-3} ensures connectivity of the coupling graph and  positive semi-definiteness of the Laplacian matrix of  $\NN(L_0-F)$. In the penalty term, $\gamma >0$ is a design parameter and $\ell_0$ sparsity measure of matrix $A_F$ is equal to the total number of nonzero elements of $A_F$, which is  defined by \eqref{sparsity-measure-1}. For an undirected graph, the value of this sparsity measure is equal to twice the number of the edges. The value of cost function \eqref{counter-1} is greater than or equal to $\gamma (n-1)$. One can get arbitrarily close to this lower bound through the following steps. Let us select a spanning tree $\mathcal{T}_0$ from link set $\mathcal{E}_0$ and set elements of $F$ in a way that $L_0-F$ becomes the Laplacian matrix of $\mathcal{T}_0$ with link weights $w_0 - w(e)$. It is known that quantity $\|G_F\|_{\mathcal{H}_2}$ is bounded if and only if the corresponding coupling graph to $L_0-F$ has at least $n-1$ links \cite{Siami14arxiv}, which in our case $\mathcal{T}_0$ does have $n-1$ links. Therefore,  the value of $\|G_F\|_{\mathcal{H}_2}$ can be made arbitrarily close to $0$ by letting $w(e)$ get closer to  $-\infty$. 

After relaxing $\ell_0$-measure by $\ell_1$-norm and representing the $\mathcal{H}_2$-norm in terms of Laplacian spectrum, the objective function \eqref{counter-1} can be rewritten as 
\begin{equation}
\frac{1}{2} \sum_{i=2}^n \lambda_i(L_0-F)^{-1}~+~\frac{\gamma}{2} \hspace{0.05cm} \|A_F\|_{\ell_1} \label{cost-1}
\end{equation}
where $\|A_F\|_{\ell_1}$ is equal to the sum of the absolute values of all elements of $A_F$ and 
\[ \|A_F\|_{\ell_1} = 2 \sum_{e \in \mathcal{E}_0}(w_0-w(e)) = \tr(L_0-F). \]
The cost function \eqref{cost-1} can be simplified further to obtain
\begin{equation}
\frac{1}{2} \sum_{i=2}^n \lambda_i(L_0-F)^{-1}~+~\frac{\gamma}{2} \hspace{0.05cm} \sum_{i=2}^n \lambda_i(L_0-F). \label{cost-2}
\end{equation}

Thus, the $\ell_1$-problem is the minimization problem with cost function \eqref{cost-2} and constraints \eqref{counter-2}-\eqref{counter-3}. This problem is convex and has a unique solution. This follows from the following facts that $\mathcal{H}_2$-norm is a systemic measure and convex, trace is a linear operator and convex, and sum of these two convex functions results in a convex cost function  \eqref{cost-2}. The inequality \eqref{counter-2.5} also represents a convex set.  Moreover, positive semi-definiteness of $L_0-F$ and connectivity of the resulting network $\NN(L_0-F)$ are both convex constraints; therefore, constraint \eqref{counter-3} is convex. By applying the arithmetic and geometric means inequality, one can show that the spectral function \eqref{cost-2} is lower bounded by constant $(n-1) \sqrt{2 \gamma}$. By respecting all the constraints and for all design parameters $\gamma > \frac{1}{8 n^2 w_0^2}$, this lower bound can be achieved by a complete graph with identical link weights {$w(e)^*=\frac{1}{n \sqrt{2 \gamma}}-w_0$}.  An interesting scenario happens when $\gamma=2$ and $w_0 > \frac{1}{4}n^{-1}$ where the lower bounds on the least achievable cost values for both $\ell_0$-problem and its relaxed $\ell_1$-problem coincide and become $2 (n-1)$. In this case,  the optimal networks from solving  the $\ell_0$-problem and $\ell_1$-problem have very different sparsity patterns: the $\ell_0$-problem solution is a network with a spanning tree topology and arbitrarily large link weights, while $\ell_1$-problem identifies a network whose coupling graph is complete with identical link weights. This shows that $\ell_1$-relaxation of a $\ell_0$-regularized optimal control does not always provide sparse solutions.

\section{ Illustrative Examples}
\label{example}

\begin{figure}[t]
\centering
\begin{tabular}{c}
\includegraphics[trim = 45 65 35 65, clip,width=.43 \textwidth]{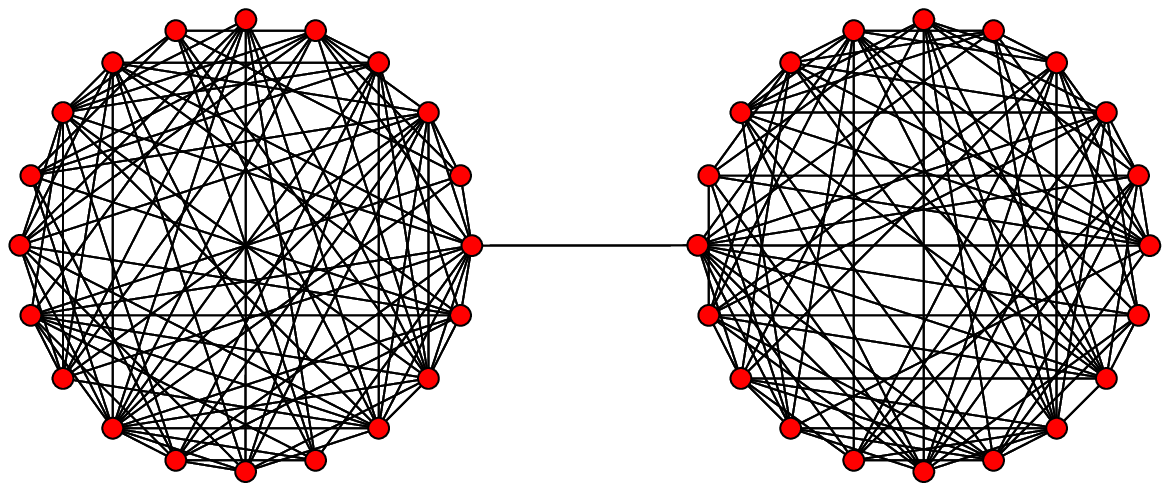}\\(a)\\
 \includegraphics[trim = 45 65 35 65, clip,width=.43 \textwidth]{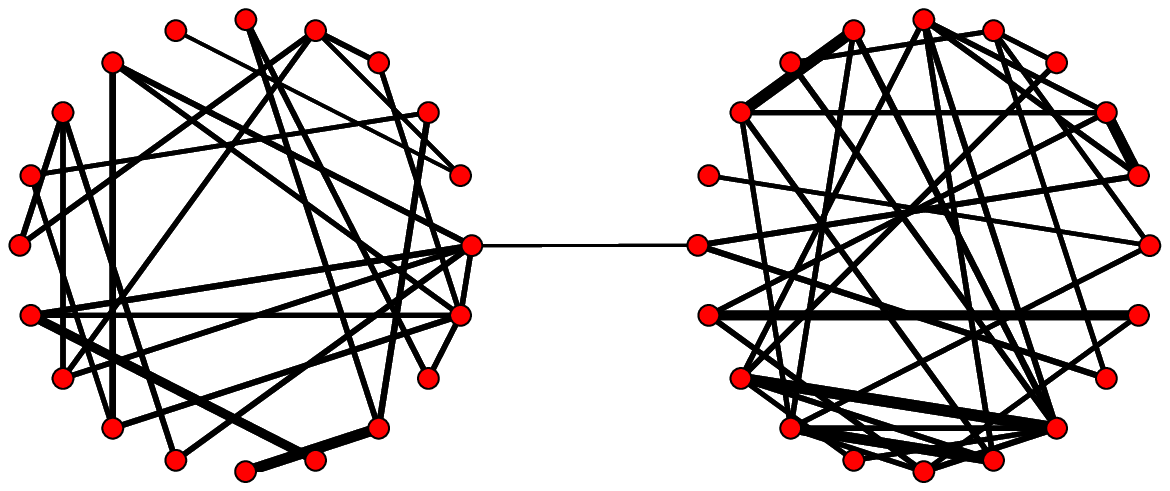} \\(b)\\
\end{tabular}
\caption{\small (a) An unweighted coupling graph with $40$ nodes, $201$ links, $\|G\|_{\mathcal H_2}= 2.7837$, and $w_{\text{total}}(L)=201$.  (b) An abstraction of Fig. \ref{40_node}(a) with $40$ nodes, $61$ links, $\|G_s\|_{\mathcal H_2}=3.0805$, and $w_{\text{total}}(L_s)=199.88$.}
\label{40_node}
\end{figure}

In this section, we present several numerical examples to illustrate our theoretical findings. 

\begin{figure}[t]
\begin{center}
\psfrag{y}[c][c]{\small Probability}
\psfrag{x}[c][c]{\small Label of links}
 \includegraphics[trim = 7 7 7 7, clip,width=.52 \textwidth]{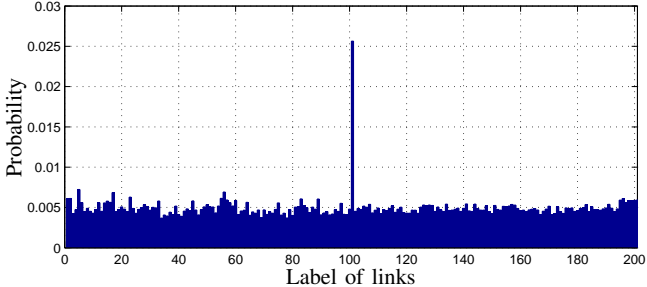} 
\caption{\small The probability of selecting a link  of the coupling graph given by Fig. \ref{40_node}(a) as an important link.  A cut edge is the most important link with the highest probability: if we throw out the cut edge, the coupling graph of the resulting abstraction will be disconnected for sure.}
\label{probability_A}
\end{center} 
\end{figure}
%
	

\begin{example}
\label{40node:example}
We first consider a consensus network with $40$ agents defined over an unweighted coupling graph with two dense components which are connected by a single link, i.e., a cut edge. Each of the components is obtained by adding $100$ uniformly and randomly selected  links to an empty graph with $20$ nodes; see Fig. \ref{40_node}(a). 
Based on Algorithm 1, we sample low-connectivity coupling links (\ie~ feedback gains) with high probability and high-connectivity coupling links with low probability. The probability of selecting a link from the coupling graph is depicted in Fig. \ref{probability_A}. One observes that the probability of selecting the cut edge as an important link is much higher than the probability of choosing other links. Fig. \ref{40_node}(b) shows a (0.5, 3.05)-abstraction of the network after applying Algorithm 1 that has $61$ links and meets all requirements of Definition \ref{abstract-def}. The coupling graph of the abstraction is weighted and has about 70\% fewer links than the original network.  Although, we set $\epsilon=0.5$  when running Algorithm 1, the performance loss of the resulting abstraction is less than $24$\% according to Table \ref{table_2}.

\begin{table}[t]
\centering
{\small
\begin{tabular}{ |l| c|  }  
\hline
Systemic Performance Measure & $\frac{ |{\Pi_\rhoo(L_s)}-{\Pi_\rhoo(L)}|}{\Pi_\rhoo(L_s)} \times 100$    \\
\hline
\hline
System Hankel Norm: $\frac{1}{2}{\lambda_2}^{-1}$& 19.65 \%  \\ 
\hline
 Squared $\mathcal H_2$ Norm: $\frac{1}{2}\sum_{i=2}^n{\lambda_i}^{-1}$&  18.34 \%   \\ 
\hline
Zeta Spectral Norm: $\Big (\sum_{i=2}^n{\lambda_i}^{-2}\Big)^{\frac{1}{2}}$& 15.26 \% \\ 
\hline
Local Deviation Error: $\frac{1}{2}\sum_{i \in \mathcal V} {d_i}^{-1}$& 23.16  \% \\ 
\hline
\end{tabular}}
\caption{\small{Relative performance loss percentage of network Fig. \ref{40_node}(a) with respect to its abstraction Fig. \ref{40_node}(b).}}
\label{table_2}
\end{table}

\end{example}

       \begin{figure}[t]
	\centering
	\begin{tabular}{c c}
	\includegraphics[trim = 45 20 35 10, clip,width=.225 \textwidth]{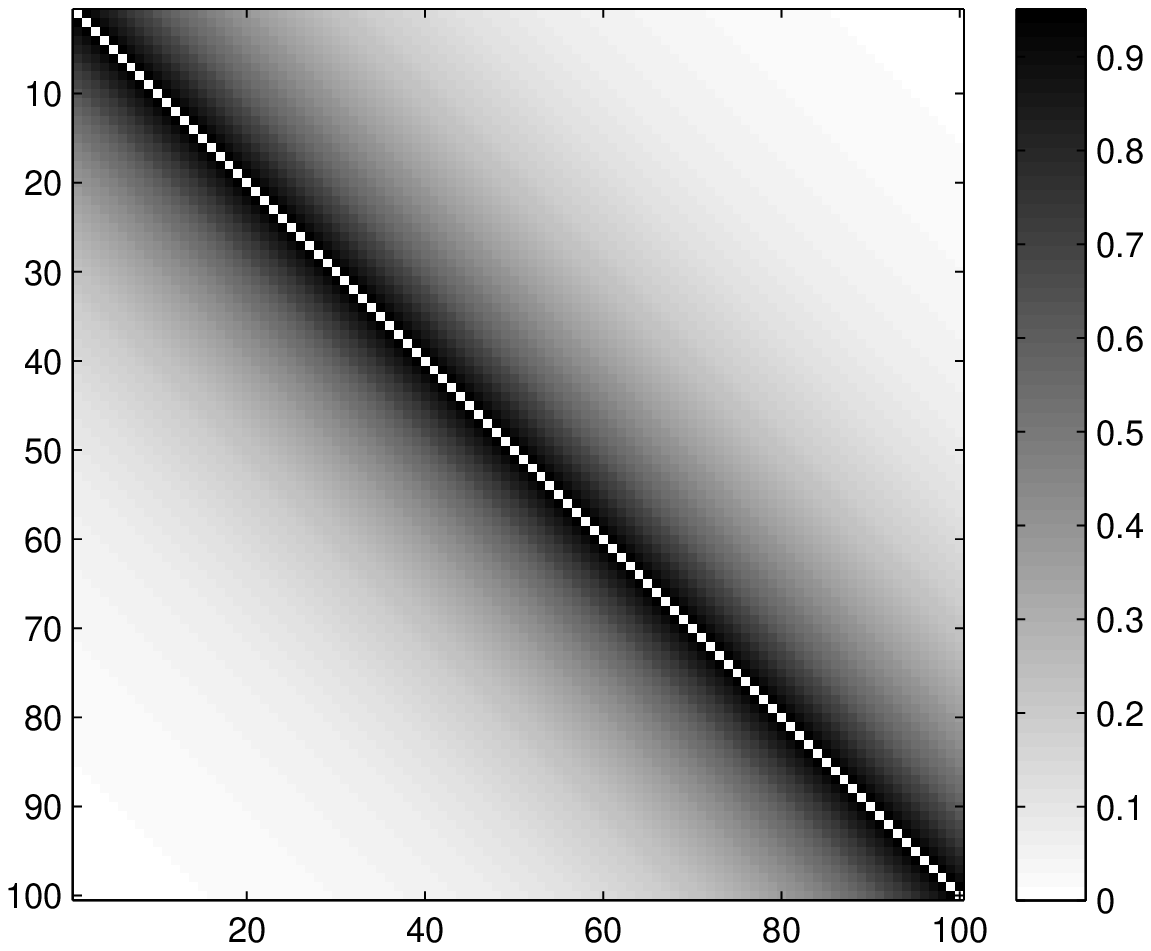} &
	\includegraphics[trim = 45 20 35 10, clip,width=.225 \textwidth]{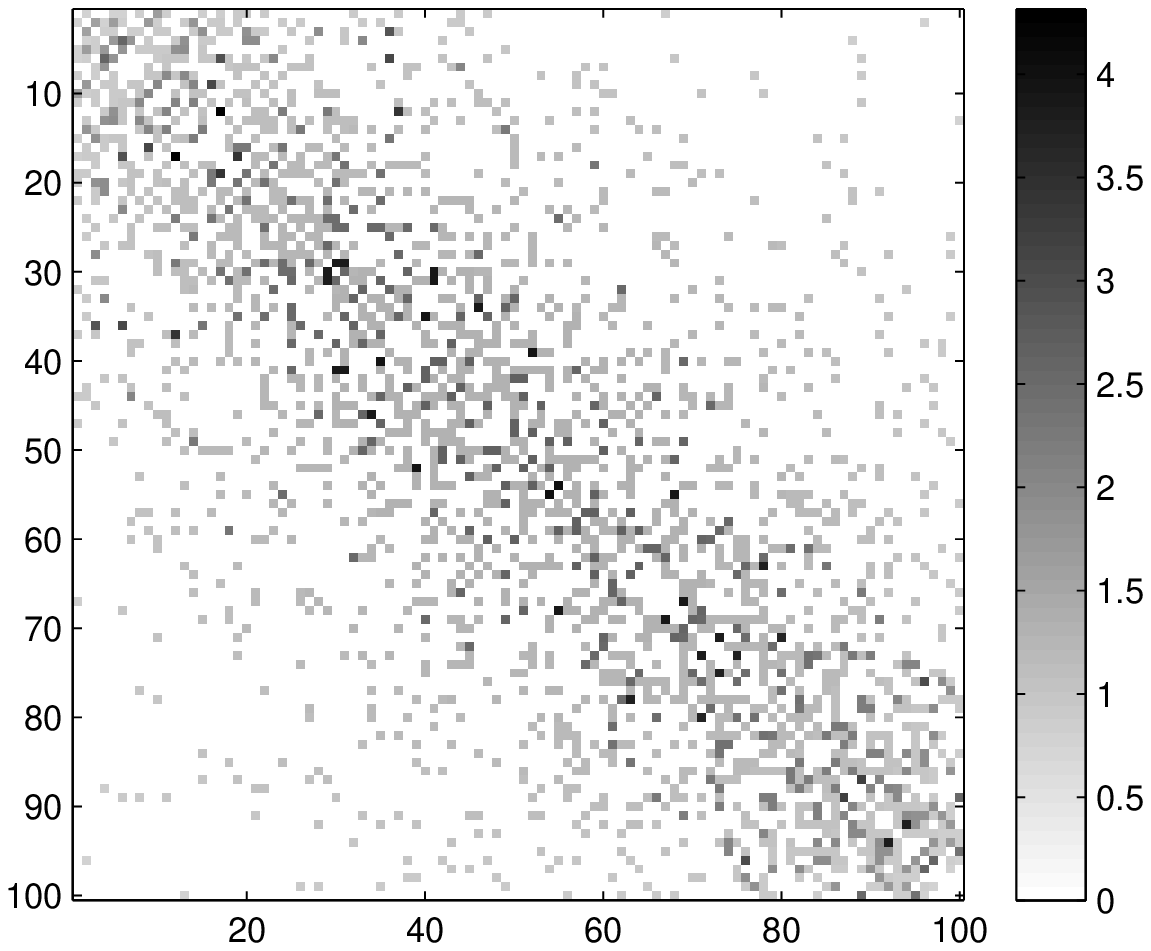} \\ (a) & (b)
	\end{tabular}
	\caption{\small (a) This plot demonstrates the sparsity pattern of adjacency matrix of network in Example \ref{1:example} with coupling  parameters $c=1$ and $\gamma=0.05$. This network has $100$ agents and $4,950$ links and the color intensity at each pixel of this plot shows magnitude of the corresponding element in the adjacency matrix. (b) This plot depicts the sparsity pattern of adjacency matrix of an abstraction that has $1114$ links. The relative $\mathcal H_2$ error of these two networks is $\|G-G_s\|_{\mathcal H_2}/\|G\|_{\mathcal H_2}=0.18$ and the ratio of their total weights is $w_{\text{total} }(L_s)/w_{\text{total} }(L) =1.0028$.} \label{decay} 
\end{figure}
\begin{table}[t]
        \centering
        {\small
        \begin{tabular}{ |l| c|  }  
        \hline
        Systemic Performance Measure~&~ $\frac{ |{\Pi_\rhoo(L_s)}-{\Pi_\rhoo(L)}|}{\Pi_\rhoo(L_s)} \times 100$ ~\\ 
        \hline
        \hline
        System Hankel Norm: $\frac{1}{2}\lambda_2^{-1}$& 10.72 \% \\  
        \hline
        Squared $\mathcal H_2$ Norm: $\frac{1}{2}\sum_{i=2}^n{\lambda_i}^{-1}$&  6.44 \% \\ 
        \hline
        Zeta Spectral Norm: $\Big (\sum_{i=2}^n{\lambda_i^{-2}}\Big)^{\frac{1}{2}}$& 9.69 \% \\ 
        \hline
        Local Deviation Error: $\frac{1}{2}\sum_{i \in \mathcal V} {d_i^{-1}}$& 3.07 \% \\ 
        \hline
        \end{tabular}}
        \caption{\small{Relative performance loss percentage of network Fig. \ref{decay}(a) with respect to its abstraction Fig. \ref{decay}(b) that has $77.49$ \% fewer links.}}
        \label{table_3}
\end{table}

\begin{example}
\label{1:example}
Let us consider a consensus network with 100 agents and exponentially decaying couplings that are defined by
\begin{figure}[t]
	\centering
	\includegraphics[trim = 5 5 5 5, clip,width=.52 \textwidth]{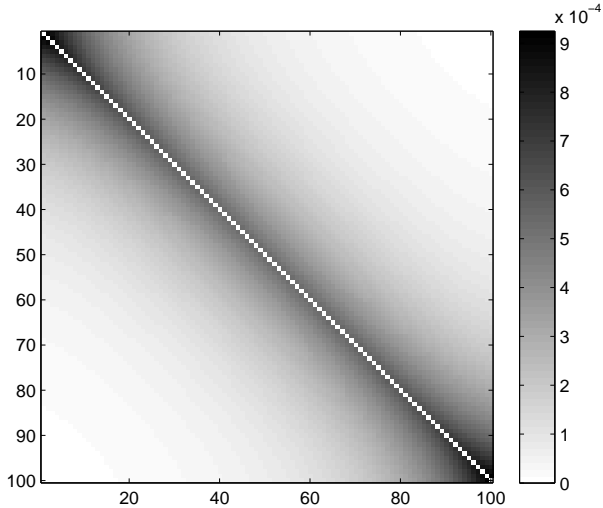} 
	\caption{\small This plot presents the probability distribution of the sampling process to choose important links in Example \ref{1:example}. The color intensity of each pixel shows importance of that link for sampling.}
	\label{prob}
\end{figure}

	\begin{eqnarray}
		w(\{i,j\}) ~=~ \left\{ 
		\begin{array}{l l}
		c \hspace{0.05cm}e^{-\gamma |i-j|} & \quad \text{if $i \neq j$}\\
		0 & \quad \text{if $i=j$}
		\end{array} \right.
		\label{spa-dec}
	\end{eqnarray}  
where $c$ and  $\gamma$ are positive numbers and $i, j \in \mathcal V$. This class of networks arises in various applications where there is a notion of spatial distance between the subsystems; we refer to  \cite{motee-sun} for more details.  
Fig. \ref{decay}(a) shows the adjacency matrix of the coupling graph of this network.
According to Theorem \ref{th-main}, this network has a $(0.5, 27.85)$-abstraction.   Fig. \ref{decay}(b) illustrates a $(0.5, 22.28)$-abstraction of the network after applying Algorithm 1, where the design parameter $\epsilon$ is set to $0.5$. The original network has $4,950$ links, while its abstraction has $1,114$ coupling links and meets all requirements of Definition \ref{abstract-def}. As a result, abstraction achieves $77.49$ \% sparsification. Although, in this example, we allow $50$\% performance loss, numerous simulation examples assert that the resulting bounds for performance loss can be comparably smaller.  As it is seen from Table \ref{table_3}, the relative performance loss percentage is less than $11$\%. Fig. \ref{prob} depicts the probability distribution of sampling process to select important links in Algorithm 1, where  the color intensity of pixels show how important that link is. According to Algorithm 1, low-connectivity coupling links are sampled with higher probability than  high-connectivity coupling links.

\begin{figure}[t]
                \centering
                \begin{tabular}{c }
                \includegraphics[trim = 45 20 40 0, clip,width=.43 \textwidth]{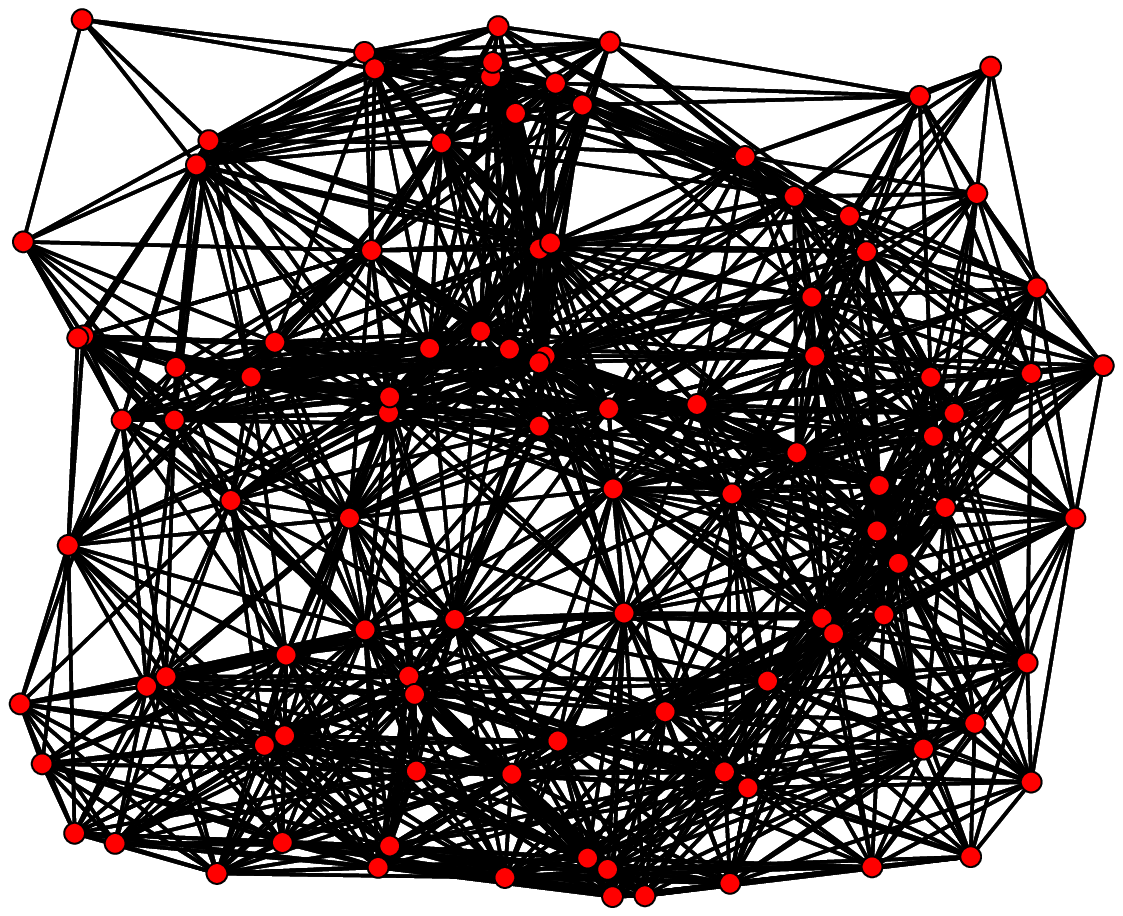} \\
                (a)\\
                \includegraphics[trim = 45 20 40 0, clip,width=.43\textwidth]{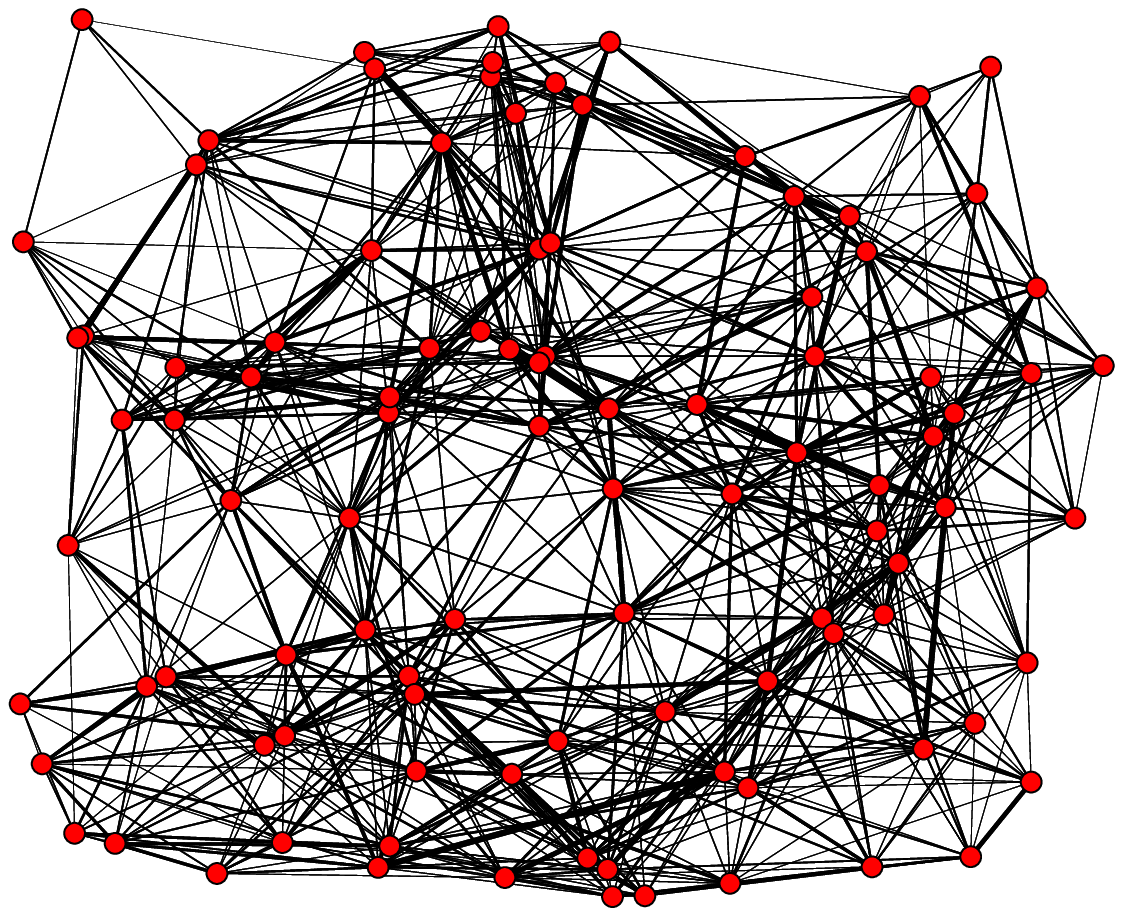} \\
                (b)
                \end{tabular}
                \caption{\small {(a)} An unweighted coupling (proximity) graph of a consensus network  with $100$ agents is presented. Every agent is connected to all of its spatial neighbors within a closed ball of radius $r=10$. This graph has $1,291$ links and  $w_{\text{total}}(L)=1,291$. (b) This graph shows a $(0.5,16.62)$-abstraction of the network. Our abstraction algorithm results in a network with a weighted coupling graph that has $831$ links and $w_{\text{total}}(L_s)=1293.4$. The relative $\mathcal H_2$ error of these two networks is $\|G-G_s\|_{\mathcal H_2}/\|G\|_{\mathcal H_2}=0.17$.}
         \label{ex_3}
	\end{figure}

\end{example}

\begin{example}
\label{2:example}
	
Let us consider a dynamical network consists of $100$ agents which are randomly distributed in a $30 \times 30$ square-shape area in space and  are coupled over a proximity graph. Every agent is connected to all of its spatial neighbors within a
closed ball of radius $r=10$. Fig. \ref{ex_3}(a) shows the resulting coupling graph of this dynamical network that has $100$ nodes and $1,291$ links and Fig.~\ref{ex_3}(b) depicts an$(0.5,16.62)$-abstraction of this network, which is obtained using Algorithm 1 with $\epsilon = 0.5$. 
The number of coupling links in this abstraction is $831$, which is $35.63\%$ sparsification, and meets all requirements of Definition \ref{abstract-def}. Table \ref{table_4} summarizes the percentage of the relative performance loss with respect to some systemic performance measures.

\begin{table}[h]
	\centering
	{\small
	\begin{tabular}{ |l| c|   }  
	\hline
	Systemic Performance Measure~&~$\frac{ |{\Pi_\rhoo(L_s)}-{\Pi_\rhoo(L)}|}{\Pi_\rhoo(L_s)} \times 100~$  \\  
	\hline
	\hline
	Hankel Norm: $\frac{1}{2}{\lambda_2}^{-1}$& 19.65 \%  \\  
      \hline
      Squared $\mathcal H_2$-Norm: $\frac{1}{2}\sum_{i=2}^n{\lambda_i}^{-1}$&   18.34 \% \\ 
      \hline
      Zeta Spectral Norm: $\Big (\sum_{i=2}^n{\lambda_i^{-2}}\Big)^{\frac{1}{2}}$& 15.26 \%\\ 
      \hline
      Local Deviation Error: $\frac{1}{2}\sum_{i \in \mathcal V} {d_i^{-1}}$& 23.16 \%\\ 
      \hline
      \end{tabular}}
      \caption{\small{Relative performance loss percentage of network Fig. \ref{ex_3}(a) with respect to its abstraction Fig. \ref{ex_3}(b) that has  $35.63\%$ fewer links.}}
      \label{table_4}
\end{table}

\end{example}

\begin{example}
\label{form:example}
Let us consider a simple model for the formation control of a group of autonomous vehicles, which is given by 
\begin{eqnarray}
\left[ \begin{array}{ccc}
\dot x(t) \\
\dot v(t) \end{array} \right] & =& \left[ \begin{array}{ccc}
0& I  \\
		-L & -\beta L  \end{array} \right] \left[ \begin{array}{ccc}
x(t) \\
		v(t) \end{array} \right]+\left[ \begin{array}{ccc}
0\\
		I \end{array} \right] \xi(t) \label{formation-1}
		\\
 y(t) &=& M_n v(t) \label{formation-2}		
\end{eqnarray}
where $\beta > 0$ is a design parameter. Each vehicle has a position and a velocity variable. The state variable of the entire network is denoted by $[\begin{array}{cc}x(t) & v(t)\end{array}]^{\text T}$ and is measured relative to a pre-specified desired trajectory $x_d(t)$ and velocity $v_d(t)$. Without loss of generality, we may assume that the position and velocity of each vehicle are scalar variables. The reason is that one can decouple  higher $D$-dimensional models into $D$ decoupled  \eqref{formation-1}-\eqref{formation-2} models. The overall objective is for the network to reach a desired formation pattern, where each autonomous vehicle travels at the constant desired velocity $v_d$ while preserving a pre-specified distance between itself and each of its neighbors. In this model, the state feedback controller uses both position and velocity measurements and $L$ is, in fact, the corresponding feedback gain, which represents the coupling topology in the controller array, and constant $\beta$ is a design parameter  \cite{Bamieh11, Bamieh12}.
We consider the steady-state variance of the performance output of this network  as the performance measure. This quantity is indeed equivalent to the square of the $\mathcal H_2$-norm of the system from the exogenous disturbance input  to the performance output \cite{Bamieh12, Young10, Jadbabaie13, Bamieh11, bamgay13acc,LovisariGarinZampieriResistance}. The squared $\mathcal H_2$-norm of \eqref{formation-1}-\eqref{formation-2} can be characterized in terms of Laplacian eigenvalues of the coupling graph as follows
	\begin{equation}
	\Theta_2^2(L):=\lim_{t \rightarrow \infty} \mathbb E \left[ y^{\text T}(t)y(t) \right] \, = \, \frac{1}{2\beta}\sum_{i=2}^n \lambda_i^{-2} \, = \, \frac{1}{2\beta} \, \zeta_2^2(L).\label{perf-meas}
	\end{equation}
We refer to \cite{Siami13cdc} for more details. This quantity is a homogeneous systemic performance measure; therefore, we can apply our abstraction algorithm.  Suppose that the coupling graph of network \eqref{formation-1}-\eqref{formation-2} is given by Fig. \ref{ex_3}(a). Then, as we mentioned in Example \ref{2:example}, Fig.~\ref{ex_3}(b) illustrates one example of $(0.5,16.62)$-abstraction of this network that is obtained using Algorithm 1. As shown in Example \ref{2:example}, this network has fewer coupling links ($831$ links) compare to the original network. Table \ref{table_5} presents the percentage of the relative performance loss, where both systemic performance measures are homogeneous of order  $-2$. This example shows that our proposed abstraction algorithm can be successfully applied to second-order linear consensus networks as well.  

\begin{table}[h]
	\centering
	{\small
	\begin{tabular}{ |c| c|   }  
	\hline
	Systemic Performance Measure~&~$\frac{ |{\Pi_\rhoo(L)}-{\Pi_\rhoo(L_s)}|}{\Pi_\rhoo(L_s)} \times 100~$  \\  
	\hline
	\hline
      Squared $\mathcal H_2$-Norm: $\frac{1}{2\beta} \sum_{i=2}^n{\lambda_i^{-2}}$& 17.58 \%\\ 
      \hline
      Local Deviation Error:  $\frac{1}{2 \beta}\sum_{i=1}^n {d_i}^{-2}$ & 11.38 \% \\
      \hline
      \end{tabular}}
      \caption{\small{Relative performance loss percentage of network \eqref{formation-1}-\eqref{formation-2} with coupling graph Fig. \ref{ex_3}(a) with respect to its abstraction Fig. \ref{ex_3}(b).}}\label{table_5}
\end{table}

\end{example}

\section{Discussion}

{We have introduced a notion of {abstraction} for  a class of linear consensus networks based on notions of spectral sparsification. There have been  several close-in-spirit notions of graph sparsifications  in the context of theoretical computer science.  While these other notions are interesting for their own sake from a combinatorial standpoint, their connections to performance analysis and synthesis of dynamical networks are not trivial and require further scrutiny. In this context,  for a given graph there are several ways to define sparse subgraphs, namely, 
 distance sparsifiers that approximate all pairwise distances up to a multiplicative and/or additive error (see \cite{Spanner} and subsequent research on spanners),
{cut sparsifiers} that approximate every cut to an arbitrarily small multiplicative error \cite{Benczur},
{spectral sparsifier} that approximate every eigenvalue to an arbitrarily small multiplicative error \cite{Spielman:2004}, and many more.
It is shown that sparsifiers can be constructed by sampling links according to their strength, effective resistance \cite{Spielman}, edge connectivity \cite{Fung}, or by sampling random spanning trees \cite{Goyal}.
Bencz\'ur and Karger propose a randomized algorithm to construct a cut sparsifier in $\mathcal O(m \log^2 n)$ time for unweighted graphs and $\mathcal O(m \log^3 n)$ time for weighted graphs \cite{Benczur,Fung}. The notion of spectral sparsifier is stronger than cut sparsifier, which implies spectral sparsifiers are also cut sparsifiers. One of our current research directions is to investigate all these methods of sparsification and explore their connections to dynamical networks.  }

In our analysis, we assumed that the input matrix of the network is an identity matrix. Further extension of the current work includes linear consensus networks over directed coupling graphs with arbitrary input matrices. However, this generalization is challenging and even the problem formulation demands some appropriate conditions to guarantee the boundedness of  performance measures.

\section*{Acknowledgment}
The authors would like to thank Yaser Ghaedsharaf for suggesting materials of Section \ref{ell0-ell1} and Professor Victor M. Preciado for several fruitful discussions about the subject.
\begin{spacing}{1}
\bibliography{main_Milad}
\end{spacing}

\begin{IEEEbiography}{Milad Siami} 
received his dual B.Sc. degrees in electrical engineering and pure mathematics from Sharif University of Technology in 2009, M.Sc. degree in electrical engineering from Sharif University of Technology in 2011, and M.Sc. and Ph.D. degrees in mechanical engineering from Lehigh University in 2014 and 2017 respectively. From 2009 to 2010, he was a research student at the Department of Mechanical and Environmental Informatics at the Tokyo Institute of Technology, Tokyo, Japan. He is currently a postdoctoral associate in the Institute for Data, Systems, and Society at MIT. His research interests include distributed control systems, distributed optimization, and applications of fractional calculus in engineering. He received a Gold Medal of National Mathematics Olympiad, Iran (2003) and the Best Student Paper Award at the 5th IFAC Workshop on Distributed Estimation and Control in Networked Systems (2015). Moreover, he was awarded  RCEAS Fellowship (2012), Byllesby Fellowship (2013), Rossin College Doctoral Fellowship (2015), and Graduate Student Merit Award (2016) at Lehigh University.
\end{IEEEbiography}

\begin{IEEEbiography}{Nader Motee}
(S'99-M'08-SM'13) received his B.Sc. degree in Electrical Engineering from Sharif University of Technology in 2000, M.Sc. and Ph.D. degrees from University of Pennsylvania in Electrical and Systems Engineering in 2006 and 2007 respectively. From 2008 to 2011, he was a postdoctoral scholar in the Control and Dynamical Systems Department at Caltech. He is currently an Associate Professor in the Department of Mechanical Engineering and Mechanics at Lehigh University. His current research area is distributed
dynamical and control systems with particular focus on issues related to sparsity, performance, and robustness. He is a past recipient of several awards including the 2008 AACC Hugo Schuck best paper award, the 2007 ACC best student paper award, the 2008 Joseph and Rosaline Wolf best thesis award, a 2013 Air Force Office of Scientific Research  Young Investigator Program award (AFOSR YIP), a 2015 NSF Faculty Early Career Development (CAREER) award, and a 2016 Office of Naval Research Young Investigator Program award (ONR YIP).
\end{IEEEbiography}

\end{document}